\def\ps@pprintTitle{%
   \let\@oddhead\@empty
   \let\@evenhead\@empty
   \def\@oddfoot{\reset@font\hfil\thepage\hfil}
   \let\@evenfoot\@oddfoot
}
\newtheorem{theorem}{Theorem}
\newtheorem{lemma}[theorem]{Lemma}
\newtheorem{corollary}[theorem]{Corollary}
\newtheorem{open}[theorem]{Open Problem}
\newtheorem{example}[theorem]{Example}
\newtheorem{conj}[theorem]{Conjecture}
\newcommand{\rank}{{\mathrm{rank}}}
\newcommand{\tr}{{\mathrm{Tr}}}
\newcommand{\gf}{{\mathrm{GF}}}
\newcommand{\PG}{{\mathrm{PG}}}
\newcommand{\support}{{\mathrm{suppt}}}
\newcommand{\Aut}{{\mathrm{Aut}}}
\newcommand{\PAut}{{\mathrm{PAut}}}
\newcommand{\MAut}{{\mathrm{MAut}}}
\newcommand{\GAut}{{\mathrm{Aut}}}
\newcommand{\Sym}{{\mathrm{Sym}}}
\newcommand{\GL}{{\mathrm{GL}}}
\newcommand{\wt}{{\mathtt{wt}}}
\newcommand{\RM}{{\mathrm{RM}}}
\newcommand{\PRM}{{\mathrm{PRM}}}
\newcommand{\cP}{{\mathcal{P}}}
\newcommand{\cS}{{\mathcal{S}}}
\newcommand{\cB}{{\mathcal{B}}}
\newcommand{\C}{{\mathcal{C}}}
\newcommand{\cH}{{\mathcal{H}}}
\newcommand{\ba}{{\mathbf{a}}}
\newcommand{\bb}{{\mathbf{b}}}
\newcommand{\bc}{{\mathbf{c}}}
\newcommand{\bg}{{\mathbf{g}}}
\newcommand{\bx}{{\mathbf{x}}}
\newcommand{\bzero}{{\mathbf{0}}}
\newcommand{\bD}{{\mathbb{D}}}
\newcommand{\GA}{{\mathrm{GA}}}
\newcommand{\PGL}{{\mathrm{PGL}}}
\begin{document}

\begin{frontmatter}




\title{The Support Designs of Several Families of Lifted Linear Codes}


\author[cding]{Cunsheng Ding}
\ead{cding@ust.hk}

\author[zhs]{Zhonghua Sun}
\ead{sunzhonghuas@163.com}

\author[cding]{Qianqian Yan}
\ead{19118010@bjtu.edu.cn, yanqianqian@ust.hk}

\address[cding]{Department of Computer Science and Engineering, The Hong Kong University of Science and Technology, Clear Water Bay, Kowloon, Hong Kong, China}

\address[zhs]{School of Mathematics, Hefei University of Technology, Hefei, 230601, Anhui, China}


\begin{abstract} 

A generator matrix of a linear code $\C$ over $\gf(q)$ is also a matrix of the same rank $k$ over any 
extension field $\gf(q^\ell)$ and generates a linear code of the same length, same dimension and 
same minimum distance over $\gf(q^\ell)$, denoted by $\C(q|q^\ell)$ and called a lifted code of $\C$.  Although $\C$ and 
their lifted codes $\C(q|q^\ell)$ have the same parameters, they have different weight distributions 
and different applications. Few results about lifted linear codes are known in the literature. 
This paper proves some fundamental theory for lifted linear codes, 
and studies the support $2$-designs of the lifted projective Reed-Muller codes, lifted Hamming codes and lifted Simplex codes. In addition, this paper 
settles the weight distributions of  the lifted Reed-Muller codes of certain orders, 
and investigates the support $3$-designs of these  lifted codes. 
As a by-product,  an infinite family of three-weight projective codes over $\gf(4)$ is obtained.  
\end{abstract}

\begin{keyword}
Hamming code \sep lifted code   \sep Reed-Muller code \sep Simplex code \sep $t$-design.

\MSC  94B05 \sep 51E05

\end{keyword}

\end{frontmatter}


\section{Introduction}

\subsection{The support designs of linear codes}

Let $v$ and $k$ be positive integers with $1 \leq k \leq v$.  Let $t$ be a positive integer with $t \leq k$.  
Let $\cP$ be a set of $v$ elements and let $\cB$ be a set of $k$-subsets of $\cP$. The pair $\bD := (\cP, \cB)$ is an 
incidence structure, where the incidence relation is the set membership. The pair  
$\bD = (\cP, \cB)$ is called a $t$-$(v, k, \lambda)$ {\em design}, or simply {\em $t$-design}, if every $t$-subset of $\cP$ is contained in exactly $\lambda$ elements of
$\cB$. The elements of the set $\cP$ are referred to as {\em points}, and those of the set $\cB$ are called {\em blocks}. 
The set $\cB$ is called the block set. 
The number of blocks in $\cB$ is usually denoted by $b$. 
Let $\binom{\cP}{k}$ denote the set of all $k$-subsets of $\cP$. Then the incidence structure $\left (\cP, \binom{\cP}{k} \right )$ is a $k$-$(v, k, 1)$ design, 
which is called a \emph{complete design}. 
  A $t$-design is said to be {\em simple} if $\cB$ does not contain
any repeated blocks.  This paper 
considers only simple $t$-designs.
A $t$-$(v,k,\lambda)$ design is called a
{\em Steiner system} if $t \geq 2$ and $\lambda=1$,
and is denoted by $S(t,k, v)$. 

There are several approaches to constructing $t$-designs with linear codes.  
One of the coding-theoretic constructions of $t$-designs is the following.   
Let $\C$ be a $[v, \kappa, d]$ linear code over $\gf(q)$, which 
is a $\kappa$-dimensional subspace of the vector space $\gf(q)^v$ with minimum Hamming distance $d$ \cite{HP03,MS77} 
and is also called a $[v, \kappa, d]_q$ code.  We sometimes use $d(\C)$ to denote the minimum Hamming 
distance of a linear code $\C$. 
Let $A_i$ or $A_i(\C)$ denote the number of codewords with Hamming weight $i$ in $\C$ 
for $0 \leq i \leq v$. The sequence $(A_0, A_1, \ldots, A_v)$ is called the \emph{weight distribution} of $\C$.  
The {\it support} of a codeword  $\bc=(c_1,  c_2, \ldots, c_v)$ of a linear code $\C$ is defined by 
$$
\support(\bc)=\{1 \leq i \leq v: c_i \neq 0\}. 
$$ 
For each $k$ with $A_k(\C) \neq 0$,  let $\cB_k(\C)$ denote
the set of the supports of all codewords with Hamming weight $k$ in $\C$, 
where $\cB_k(\C)$ is not allowed to have repeated blocks. 
Suppose that the coordinates of the codewords in $\C$  
are indexed by $(p_1, \ldots, p_v)$. Let $\cP(\C)=\{p_1, \ldots, p_v\}$.  The incidence structure $(\cP(\C), \cB_k(\C))$
may be a $t$-$(v, k, \lambda)$ simple design for some positive integer $\lambda$, which is called a
\emph{support design} of the code $\C$, and is denoted by $\bD_k(\C)$. In this case, we say that the codewords of weight $k$ 
in $\C$ support or hold a $t$-$(v, k, \lambda)$ design, and for simplicity, we say that $\C$ supports or holds a $t$-$(v, k, \lambda)$ design. 

The following theorem, which was established by Assmus and Mattson and called the Assmus-Mattson Theorem, 
says that the incidence structure $\bD_k(\C):=(\cP(\C), \cB_k(\C))$ defined by 
a linear code $\C$ is a simple $t$-design under certain conditions \cite{AM69}.

\begin{theorem}\label{thm-designAMtheorem}
Let $\C$ be a $[v,k,d]_q$ code.  Let $d^\perp$ denote the minimum distance of the dual code 
$\C^\perp$ of $\C$. 
Let $w$ be the largest integer satisfying $w \leq v$ and 
$$ 
w-\left\lfloor  \frac{w+q-2}{q-1} \right\rfloor <d. 
$$ 
Define $w^\perp$ analogously using $d^\perp$. Let $(A_i)_{i=0}^v$ and $(A_i^\perp)_{i=0}^v$ denote 
the weight distribution of $\C$ and $\C^\perp$, respectively. Fix a positive integer $t$ with $t<d$, and 
let $s$ be the number of $i$ with $A_i^\perp \neq 0$ for $1 \leq i \leq v-t$. Suppose $s \leq d-t$. Then 
\begin{itemize}
\item $\bD_i(\C)$ is a $t$-design provided $A_i \neq 0$ and $d \leq i \leq w$, and 
\item $\bD_i(\C^\perp)$ is a $t$-design provided $A_i^\perp \neq 0$ and 
         $d^\perp \leq i \leq \min\{v-t, w^\perp\}$. 
\end{itemize}
\end{theorem}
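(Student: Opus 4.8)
\emph{Proof proposal.}
The plan is to recast both assertions as statements that certain intersection numbers are independent of the chosen point set, to prove the dual part by a short Vandermonde-type system fed by the minimum distance $d$, and to deduce the primal part from it by MacWilliams duality. First I record the reduction. Fix a weight $i$ in the relevant range. Since $i\le w$ (respectively $i\le w^\perp$), a pigeonhole count on the $q-1$ possible coordinatewise ratios of two equal-support codewords shows that a suitable difference of them has weight at most $i-\lfloor(i+q-2)/(q-1)\rfloor<d$, hence vanishes; thus codewords of weight $i$ fall into $(q-1)$-element scalar classes with distinct supports, so $\cB_i(\cdot)$ is simple and the design property is equivalent to: \emph{for every $t$-subset $T$, the number of weight-$i$ codewords whose support contains $T$ is independent of $T$.} For $\C^\perp$ this is the second bullet directly; for $\C$ it is equivalent, by inclusion--exclusion over the subsets of $T$, to the number of weight-$i$ codewords vanishing on $T$ being independent of $T$, i.e.\ to the weight distribution of the shortened code $\C_T$ (keep the words that are zero on $T$, delete those coordinates) being independent of $T$.

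Next I build the engine for the dual designs. For any coordinate set $S$ with $|S|\le d-1$, no nonzero codeword of $\C$ is supported inside $S$, so the restriction map $\C^\perp\to\gf(q)^S$ is surjective and, by inclusion--exclusion,
\[
\#\{\,\bx\in\C^\perp:\support(\bx)\supseteq S\,\}=q^{\,v-k-|S|}(q-1)^{|S|}.
\]
Writing $g_j(T)$ for the number of weight-$j$ codewords of $\C^\perp$ whose support contains the fixed $t$-set $T$, I sum this count over all sets $S=T\cup T'$ with $T'$ an $a$-subset of the complement of $T$ and keep track of weights. For each $a$ with $0\le a\le d-1-t$ this yields
\[
\sum_{j}\binom{j-t}{a}\,g_j(T)=\binom{v-t}{a}\,q^{\,v-k-t-a}(q-1)^{t+a},
\]
a system of $d-t$ equations whose right-hand sides do not involve $T$.

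The crux, and the step I expect to be the main obstacle, is to conclude $T$-independence of each $g_j(T)$ from this system. The coefficient matrix $\bigl(\binom{j-t}{a}\bigr)$, with rows indexed by $a=0,\dots,d-1-t$ and columns by the distinct weights $j$ that occur, becomes an ordinary Vandermonde matrix in the quantities $j-t$ after elementary row operations, so it has full column rank once the number of occurring weights is at most $d-t$. This is exactly where the hypothesis $s\le d-t$ is used: capping the weight range at $v-t$ ensures that at most $s$ distinct nonzero weights of $\C^\perp$ enter, so the system is square or overdetermined with independent columns and has a unique solution, necessarily $T$-independent. Establishing this nonsingularity rigorously and dealing correctly with the weights exceeding $v-t$ --- the reason the dual conclusion is capped at $\min\{v-t,w^\perp\}$ --- is the delicate work; the non-binary scalar bookkeeping has already been absorbed into the thresholds $w$ and $w^\perp$.

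Finally I assemble the two parts. The $T$-independence of the $g_j(T)$ is precisely the second bullet: the supports of the weight-$i$ codewords of $\C^\perp$ form a $t$-$(v,i,\lambda)$ design for $d^\perp\le i\le\min\{v-t,w^\perp\}$. For the first bullet, the same design property for $\C^\perp$ makes every intersection number $\#\{\bx\in\C^\perp:|\support(\bx)\cap T|=r\}$ with $r\le t$ independent of $T$, hence the weight distribution of the punctured dual $(\C_T)^\perp$ is independent of $T$; by the MacWilliams identity so is the weight distribution of $\C_T$ itself. By the first paragraph this gives the $t$-design property of $\bD_i(\C)$ for $d\le i\le w$, completing the proof.
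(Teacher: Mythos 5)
Your scalar-class reduction via $w$ and $w^\perp$ is correct, and your fiber count $\#\{\bx\in\C^\perp:\support(\bx)\supseteq S\}=q^{v-k-|S|}(q-1)^{|S|}$ for $|S|\le d-1$ is a valid, self-contained derivation of the moment identities that drive the classical proof (the paper itself gives no proof of this theorem --- it cites Assmus--Mattson --- so I compare against the standard argument). The genuine gap is exactly the step you defer as ``delicate work'': in a containing-count formulation there is no mechanism that ``caps the weight range at $v-t$''. The unknowns $g_j(T)$ of your system run over \emph{all} nonzero weights $j\ge t$ of $\C^\perp$, including every weight in $(v-t,v]$, whereas the hypothesis $s\le d-t$ bounds only the number of weights in $[1,v-t]$; hence your Vandermonde system can be strictly underdetermined. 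Concrete failure: let $\C$ be the self-dual $[8,4,4]_2$ extended Hamming code and $t=3$; then $s=1=d-t$, the nonzero dual weights are $4$ and $8$, and your entire system is the single equation $g_4(T)+g_8(T)=2$ in two unknowns. The classical remedy is not a refinement of your system but a change of variables: run the same fiber count inside the shortened dual $(\C^\perp)_T=(\C^T)^\perp$, i.e.\ on \emph{avoiding} counts, where the cap is automatic because a codeword avoiding $T$ has weight at most $v-t$. Since $\C^T$ has minimum distance at least $d-t$, this gives $d-t$ equations in at most $s$ unknowns, so the weight distribution of $(\C^\perp)_T$ is $T$-independent; the design property for $d^\perp\le i\le\min\{v-t,w^\perp\}$ then follows by the complement-design/M\"obius argument, and this is precisely where the cap $v-t$ in the second bullet comes from.

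Your primal part breaks at the same point, twice. First, your paragraph-1 equivalence ``$\bD_i(\C)$ is a $t$-design iff the weight distribution of the shortened code $\C_T$ is $T$-independent'' is false for $i>v-t$ (there all avoiding counts vanish identically and carry no information), yet the first bullet covers such $i$: for $q=2$ one has $w=v$. Second, your closing argument needs every intersection number $\#\{\bx\in\C^\perp:|\support(\bx)\cap T|=r\}$ to be $T$-independent, which requires constancy of containing counts for \emph{all} dual weights, including those above $v-t$ --- exactly what the second bullet, capped at $\min\{v-t,w^\perp\}$, does not provide and what cannot be proved in general. The standard route is dual to yours: from the $T$-independence of the weight distribution of $(\C^\perp)_T$ one gets, via MacWilliams, $T$-independence of the weight distribution of the \emph{punctured} code $\C^T$ (not the shortened code $\C_T$), and then one runs an induction up the weights of $\C$: since $t<d$, puncturing is injective and $A_u(\C^T)=\sum_{r}\#\{\bc\in\C:\wt(\bc)=u+r,\ |\support(\bc)\cap T|=r\}$; the base case gives $A_{d-t}(\C^T)=g_d(T)$, and each subsequent step isolates the new term $g_{u+t}(T)$ because the lower-weight terms are constant by the designs already established. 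This yields constancy of $g_i(T)$ for every weight $i$, and the cap $i\le w$ is then needed only to convert codeword counts into block counts.
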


The preceding Assmus-Mattson Theorem is a very useful tool for constructing $t$-designs from linear codes 
(see, e.g., \cite{DTbook22}, \cite{DWF}, \cite{HW23}, \cite{Tonchev1}, \cite{Tonchev2}, \cite{XCQ22}), but does not 
characterize all linear codes supporting $t$-designs. The reader is referred to \cite{TDX19} for a generalized 
Assmus-Mattson theorem.

The second sufficient condition for the incidence structure $(\cP(\C), \cB_k(\C))$ to be a $t$-design is via the 
automorphism group of the linear code $\C$. Before introducing this  sufficient condition, we have to recall 
several different automorphism groups of a linear code.

The set of coordinate permutations that map a code $\C$ to itself forms a group,  where the binary operation 
for this group is the function composition. This group 
is referred to as the \emph{permutation automorphism group} of $\C$
and denoted by $\PAut(\C)$. If the length of $\C$ is $n$ and the coordinates of the codewords in $\C$ are indexed 
with the elements in the set $\{1,2, \ldots,n\}$, then $\PAut(\C)$ is a subgroup of the
\emph{symmetric group} $\Sym_n$.

A \emph{monomial matrix} over $\gf(q)$ is a square matrix that has exactly one
nonzero element of $\gf(q)$  in each row and column. It is easily seen that a monomial matrix $M$ can be written either in
the form $DP$ or the form $PD_1$, where $D$ and $D_1$ are both diagonal matrices and $P$ is a permutation
matrix. Clearly, the set of monomial matrices that map $\C$ to itself forms a group denoted by $\MAut(\C)$,  which is called the
\emph{monomial automorphism group\index{monomial automorphism group}} of $\C$. By definition, 
we have 
$$
\PAut(\C) \subseteq \MAut(\C).
$$
By definition, every element in $\MAut(\C)$ is of the form $DP$, where $D$ is a diagonal matrix and 
$P$ is a permutation matrix. 
An element of this group acts on a codeword $\bc$ as 
$$
(DP)(\bc)=\bc DP^{-1}. 
$$ 
For two elements $(D_i, P_i):=D_i P_i$ in  $\MAut(\C)$ for $i \in \{1,2\}$, 
the corresponding binary operation of the group $\MAut(\C)$ is defined by  
$$
(D_2, P_2) \circ (D_2, P_1)=
(D_1 P_1^{-1} D_2P_1, P_2P_1).
$$

The \textit{automorphism group}\index{automorphism group} of $\C$, denoted by $\GAut(\C)$, is the set
of maps of the form $M\gamma$,
where $M$ is a monomial matrix and $\gamma$ is a field automorphism, that map $\C$ to itself. 
For binary codes $\C$,  $\PAut(\C)$,  $\MAut(\C)$ and $\GAut(\C)$ are the same. If $q$ is a prime, $\MAut(\C)$ and
$\GAut(\C)$ are identical. According to their definitions, we have the following relations: 
$$
\PAut(\C) \subseteq \MAut(\C) \subseteq \GAut(\C).
$$ 
By definition, every element in $\GAut(\C)$ is of the form $DP\gamma$, where $D$ is a diagonal matrix,
$P$ is a permutation matrix, and $\gamma$ is an automorphism of $\gf(q)$. 
An element of this group acts on a codeword $\bc$ as 
$$
(DP\gamma)(\bc)=\gamma((\bc D)P^{-1}). 
$$ 
For any two elements $(D_i, P_i, \gamma_i):=D_i P_i \gamma_i$ in  $\GAut(\C)$ for $i \in \{1,2\}$, 
the corresponding binary operation of the group $\GAut(\C)$ is defined by  
$$
(D_2, P_2, \gamma_2) \circ (D_1, P_1, \gamma_1)=
(D_1P_1^{-1} \gamma_1^{-1}(D_2)P_1, P_2P_1, \gamma_2 \circ \gamma_1).
$$

The automorphism group $\GAut(\C)$ of a linear code $\C$ is said to be {\em $t$-transitive} if for every pair of $t$-element ordered
sets of coordinates, there is an element $DP\gamma$ of the automorphism group $\GAut(\C)$ such that its
permutation part $P$ sends the first set to the second set. The automorphism group $\GAut(\C)$ is said to be {\em $t$-homogeneous} if for every pair of $t$-element 
sets of coordinates, there is an element $DP\gamma$ of the automorphism group $\GAut(\C)$ such that its
permutation part $P$ sends the first set to the second set.

With the help of the automorphism group of a linear code $\C$, the following theorem gives another sufficient condition 
for the code $\C$ to hold $t$-designs.

\begin{theorem}~\cite[p. 308]{HP03}\label{thm-designCodeAutm}
Let $\C$ be a linear code of length $n$ over $\gf(q)$ such that $\GAut(\C)$ is $t$-transitive 
or $t$-homogeneous. Then the codewords of any weight $i \geq t$ of $\C$ hold a $t$-design, 
i.e., $\bD_i(\C)$ is a $t$-design for each $i \geq t$ with $A_i(\C)>0$.
\end{theorem}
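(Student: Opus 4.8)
The plan is to exploit the fact that every automorphism of $\C$ preserves Hamming weight and acts on the coordinate set through its permutation part, and then to run the standard counting argument showing that every $t$-subset of points lies in the same number of blocks.

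First I would show that an element $g=DP\gamma\in\GAut(\C)$ carries the support of a codeword to the image of that support under the coordinate permutation induced by $P$. Writing the action as $(DP\gamma)(\bc)=\gamma((\bc D)P^{-1})$, one observes that the diagonal matrix $D$ rescales each coordinate by a nonzero scalar and hence leaves unchanged the set of positions at which a codeword is nonzero, the field automorphism $\gamma$ sends nonzero entries to nonzero entries coordinatewise, and only the permutation matrix $P^{-1}$ actually moves coordinates around. Hence, letting $\pi$ denote the coordinate permutation induced by the permutation part $P$, we have $\support(g(\bc))=\pi(\support(\bc))$ for every $\bc\in\C$. Since $g$ fixes $\C$ setwise and preserves weight, it maps weight-$i$ codewords bijectively onto weight-$i$ codewords, so the set map $S\mapsto\pi(S)$ sends $\cB_i(\C)$ into itself; applying the same reasoning to $g^{-1}$ shows this map is in fact a bijection of the block set $\cB_i(\C)$. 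This step, isolating the permutation part as the sole contributor to the action on supports, is the crux of the argument.

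Next I would fix $i\geq t$ with $A_i(\C)\neq 0$ (the statement being vacuous otherwise) and prove that $\lambda(T):=|\{B\in\cB_i(\C):T\subseteq B\}|$ does not depend on the $t$-subset $T\subseteq\cP(\C)$. Given two $t$-subsets $T_1$ and $T_2$, the hypothesis that $\GAut(\C)$ is $t$-transitive or $t$-homogeneous supplies an element $g=DP\gamma$ whose induced coordinate permutation $\pi$ satisfies $\pi(T_1)=T_2$. Because $\pi$ permutes $\cB_i(\C)$ bijectively and $T_1\subseteq B$ holds if and only if $T_2=\pi(T_1)\subseteq\pi(B)$, the map $B\mapsto\pi(B)$ restricts to a bijection between the blocks containing $T_1$ and those containing $T_2$; therefore $\lambda(T_1)=\lambda(T_2)$.

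Finally, since $T_1$ and $T_2$ were arbitrary, $\lambda(T)$ equals a common constant $\lambda$ for all $t$-subsets $T$, and $i\geq t$ ensures that each block, being an $i$-subset, can contain $t$-subsets at all. This is precisely the defining property for $(\cP(\C),\cB_i(\C))$ to be a $t$-$(n,i,\lambda)$ design, so $\bD_i(\C)$ is a $t$-design for every $i\geq t$, as claimed. I expect no serious obstacle beyond the first step: the only points needing care are confirming that the diagonal and field-automorphism factors leave supports intact and that $\pi$ genuinely permutes $\cB_i(\C)$ rather than merely mapping it inward, both of which rest on the invertibility of $g$ inside $\GAut(\C)$.
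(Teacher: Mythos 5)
Your proof is correct. The paper itself gives no proof of this theorem---it is quoted directly from Huffman and Pless \cite[p.~308]{HP03}---and your argument (the diagonal and field-automorphism parts of an automorphism leave supports unchanged, so only the permutation part acts on $\cB_i(\C)$, and $t$-transitivity or $t$-homogeneity then forces every $t$-set of coordinates to lie in the same number of blocks) is precisely the standard counting proof of this classical result.
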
 

It is in general very hard to determine the full automorphism group of a linear code.  
As long as a $t$-transitive or $t$-homogeneous subgroup of $\GAut(\C)$ is found, 
one could make use of Theorem \ref{thm-designCodeAutm} to prove the $t$-design 
property of $\bD_i(\C)$. 

The third way for proving the $t$-design property of an incidence structure $\bD_i(\C)$ is the direct approach \cite{DingTang19,TD20,YZ}, 
where the $t$-design property of an incidence structure $(\cP(\C), \cB_i(\C))$ is proved by verifying the conditions in the definition of $t$-designs directly. This direct approach may work only when the block size is very small.

\subsection{Lifted linear codes}

Let $q$ be a power of a prime and let $\C$ be an $[n, k, d]_q$ linear code with generator matrix $G$. 
For a positive integer $\ell$, $G$ is also a matrix of rank $k$ over $\gf(q^\ell)$. Let $\C(q|q^\ell)$ denote the linear code over $\gf(q^\ell)$ 
generated by $G$, which is called the \emph{lifted code over $\gf(q^\ell)$} of $\C$. It is known that $\C(q|q^\ell)$ and 
$\C$ have the same length, dimension and minimum distance \cite[Theorem 7]{DT21},  but they have different weight distributions for $\ell>1$ (see the examples and some general results in later sections). 
Even if the weight 
distribution of $\C(q)$ is known, it may be very hard to determine the weight distribution of a lifted code 
$\C(q|q^\ell)$ for $\ell>1$.  By definition, $\C$ is the subfield subcode over $\gf(q)$ of the lifted code 
$\C(q|q^\ell)$ \cite{HP03,MS77}.

\subsection{The idea of constructing new designs with lifted codes of a linear code} 

Suppose that $\C$ supports some nontrivial $t$-designs.  
One or both of the following cases may happen: 
\begin{itemize}
\item  $\bD_i(\C(q|q^\ell))$ is a $t'$-$(n, i, \lambda'_i)$ design with $(t', \lambda'_i)\neq (t, \lambda_i)$,
            while $\bD_i(\C)$ is a $t$-$(n, i, \lambda_i)$ design for some $i$ with $A_i(\C) \neq 0$.   
 \item  $\bD_i(\C(q|q^\ell))$ is a $t'$-$(n, i, \lambda'_i)$ design with $A_i(\C(q|q^\ell)) \neq 0$,
            while $A_i(\C) = 0$.   
\end{itemize}   
If any of the two cases above happens, a new design could be obtained. This idea of obtaining new support 
designs from lifted linear codes was considered in \cite{Ding241}, where some $5$-designs were found.   

The following theoretical result will be needed in subsequent sections and is a theoretical foundation of this paper. 

\begin{theorem}\label{thm-main} \cite{Ding241}
Let $\C$ be a linear code of length $n$ over $\gf(q)$. 
Let $E$ be a subgroup of the monomial automorphism group $\MAut(\C)$. Assume that $E$ is $t$-transitive 
or $t$-homogeneous. Then the following hold: 
\begin{enumerate}
\item For each $e \in E$, $e(\C(q|q^\ell))=\C(q|q^\ell)$. 
\item $\bD_i(\C(q|q^\ell))$ is a $t$-$(n, i, \lambda_i)$ design with $A_i(\C(q|q^\ell)) \neq 0$ for any $i \geq t$, where $\lambda_i$ 
is an integer. 
\end{enumerate} 
\end{theorem}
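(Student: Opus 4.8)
The plan is to prove the two parts in order, deriving the design statement in part~2 from part~1 together with Theorem~\ref{thm-designCodeAutm}. The whole substance of the argument lies in part~1: once I show that each element of $E$ preserves the lifted code, its permutation part is unchanged, so the $t$-transitivity (or $t$-homogeneity) of $E$ transfers verbatim to the coordinate action on $\C(q|q^\ell)$, and part~2 becomes an immediate application of the automorphism-group criterion.

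For part~1, I would fix a generator matrix $G$ of $\C$ and an element $e=DP\in E\subseteq\MAut(\C)$, where $D$ is a diagonal matrix and $P$ a permutation matrix, both with entries in $\gf(q)$. The key observation is that the transition matrix between $G$ and its image under $e$ stays over the base field. Writing $\bg_1,\dots,\bg_k$ for the rows of $G$, each $\bg_j\in\C$, hence $e(\bg_j)=\bg_j DP^{-1}\in e(\C)=\C$; since $\C$ is the $\gf(q)$-row space of $G$, there is a matrix $M\in\gf(q)^{k\times k}$ with
\[
G\,D\,P^{-1}=M\,G .
\]
Now take any $\bc\in\C(q|q^\ell)$, say $\bc=\bx G$ with $\bx\in\gf(q^\ell)^k$. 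Using the monomial action recalled before the statement, $e(\bc)=\bx\,G\,D\,P^{-1}=\bx\,M\,G=(\bx M)G$, and $\bx M\in\gf(q^\ell)^k$, so $e(\bc)\in\C(q|q^\ell)$. This gives $e(\C(q|q^\ell))\subseteq\C(q|q^\ell)$, and applying the same reasoning to $e^{-1}\in E$ yields equality, which is part~1. In particular, the inclusion $E\hookrightarrow\MAut(\C(q|q^\ell))$ holds with identical permutation parts, since the matrices of $E$, viewed over $\gf(q^\ell)$, are still monomial matrices whose nonzero entries happen to lie in the subfield $\gf(q)$.

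For part~2, I would combine this with $\MAut(\C(q|q^\ell))\subseteq\GAut(\C(q|q^\ell))$: the group $\GAut(\C(q|q^\ell))$ contains $E$ and is therefore $t$-transitive or $t$-homogeneous in the sense of the permutation-part definition. Theorem~\ref{thm-designCodeAutm} then yields that $\bD_i(\C(q|q^\ell))$ is a $t$-design for every weight $i\geq t$ with $A_i(\C(q|q^\ell))\neq 0$, and the design being genuine forces the replication number $\lambda_i$ to be an integer. The only delicate point, and the step I expect to be the crux, is the lifting identity $G\,D\,P^{-1}=M\,G$: it works precisely because both the monomial matrix $e$ and the transition matrix $M$ have entries in $\gf(q)$, so a relation valid over $\gf(q)$ persists when $G$ is reinterpreted over $\gf(q^\ell)$. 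This is exactly why the hypothesis reads $E\leq\MAut(\C)$ rather than $E\leq\GAut(\C)$: a genuine field automorphism, or a diagonal entry outside $\gf(q)$, would destroy the base-field transition and the argument would break down.
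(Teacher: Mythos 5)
Your proposal is correct, and it is worth noting that the paper itself never proves Theorem~\ref{thm-main}: it is imported by citation from \cite{Ding241}, so your argument supplies a proof the present paper leaves implicit. Your route through a fixed generator matrix is sound: since $e=DP\in\MAut(\C)$ has entries in $\gf(q)$ and maps $\C$ to itself, each row of $GDP^{-1}$ lies in the $\gf(q)$-row space of $G$, giving a transition matrix $M\in\gf(q)^{k\times k}$ with $GDP^{-1}=MG$, a relation that persists when $G$ is read over $\gf(q^\ell)$; the inclusion $e(\C(q|q^\ell))\subseteq\C(q|q^\ell)$ and the group property of $E$ (or simply finiteness plus injectivity of $e$) then give equality, and part~2 follows from Theorem~\ref{thm-designCodeAutm} exactly as you say, since the permutation parts are unchanged. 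A slightly more streamlined variant, which is the one naturally suggested by this paper's own machinery, avoids the matrix $M$ altogether: by Theorem~\ref{thm-fundam2}(1), every codeword of $\C(q|q^\ell)$ is $\alpha_1\bc_1+\cdots+\alpha_\ell\bc_\ell$ with $\bc_i\in\C$, and since $e$ acts $\gf(q^\ell)$-linearly (its matrix lies over the subfield), $e\left(\sum_i \alpha_i\bc_i\right)=\sum_i \alpha_i\, e(\bc_i)\in\C(q|q^\ell)$ because $e(\bc_i)\in\C$. The two arguments are equivalent in substance --- both rest on the crux you correctly identify, namely that a base-field monomial matrix commutes with scalar extension, which is precisely why the hypothesis demands $E\leq\MAut(\C)$ rather than $E\leq\GAut(\C)$; your generator-matrix version makes the subfield mechanism explicit, while the decomposition version is shorter and reuses Theorem~\ref{thm-fundam2}.
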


\subsection{Motivations and objectives of this paper} 

Many infinite families of $2$-designs and $3$-designs supported by linear 
codes are available in the literature \cite{DTbook22,Tonchev1,Tonchev2}. 
Recently, several infinite 
families of $4$-designs from linear codes were reported in \cite{TD20,YZ}.   
No infinite family of linear codes supporting an infinite family of nontrivial simple $5$-designs 
is known.  
It is not easy to construct $5$-designs from linear codes. 
A very small number of $5$-designs from some linear codes were reported in \cite{AM69}, 
\cite{Pless72},   \cite[Appendix A]{DTbook22}, and \cite{Ding241}. 
Until now no linear code supporting a nontrivial $6$-design is known in the literature. 
Of course, there are algebraic, geometric and combinatorial approaches to constructing $t$-designs (\cite{AK92}, \cite{BJL}).   

Combinatorists are in general interested only in the following types of $t$-designs: 
\begin{itemize}
\item $t$-$(v, k, \lambda)$ designs with large strength $t$. 
\item $t$-$(v, k, \lambda)$ designs with small value $\lambda$,  in particular, $\lambda=1$. 
\item Symmetric and quasi-symmetric $t$-designs.    
\end{itemize}
However,  coding theorists are also interested in $t$-$(v, k, \lambda)$ designs with small strength $t$ and large $\lambda$, as some $2$-designs could be used to construct linear codes with very good parameters \cite{DT21}. This is the main motivation of obtaining new $2$-designs by studying the lifted Hamming and Simplex codes in this paper. 
The second motivation of this paper is that two-weight and three-weight linear codes have interesting 
applications in association schemes, cryptography and graph theory.  
  
As a follow-up of \cite{Ding241}, we will do the following in this paper:
\begin{itemize}
\item Add some new fundamental results for lifted linear codes. 
\item Study the $2$-designs supported by lifted projective Reed-Muller codes.  
\item Study the $2$-designs supported by lifted Hamming codes and lifted Simplex codes.  
\item Settle the weight distributions of the lifted codes of the Reed-Muller codes of some orders. 
\item Present many infinite families of $3$-designs supported by lifted Reed-Muller codes. 
\end{itemize} 
The contributions of this paper are summarised in Section \ref{sec-last}.  The results of this paper complement 
the literatures of coding theory and combinatorics.

\section{Fundamental results of lifted linear codes} 

In this section, we prove some fundamental results for lifted linear codes. The following theorem extends earlier results about lifted linear codes \cite{DT21}. 

\begin{theorem}\label{thm-fundam2}
Let $n $ and $\ell$ be positive integers and let $\C$ be an $[n, k, d]_q$ linear code with $k \geq 1$. Then the following hold.
\begin{enumerate}
\item If $\{\alpha_1, \alpha_2,\ldots, \alpha_\ell \}$ is a basis of $\gf(q^\ell)$ over $\gf(q)$, then the lifted code
$$\C(q | q^\ell)=\{\alpha_1 \bc_1+ \alpha_2 \bc_2+\cdots+  \alpha_\ell \bc_\ell:~\bc_i\in \C,~1\leq i\leq \ell \}.$$ 
\item The lifted code $\C(q|q^\ell)$ has parameters $[n, k, d]_{q^\ell}$. Furthermore, $$
A_d(\C(q|q^\ell))=\frac{q^\ell-1}{q-1} A_d(\C),
$$ 
and every minimum weight codeword $\bc(q|q^\ell)$ in $\C(q|q^\ell)$ is of the form $\bc(q|q^\ell)=u \bc$, where $\bc$ is a minimum weight codeword in $\C$ and $u \in \gf(q^\ell)^*$.
\item If $G=[\bg_1^T \ldots \bg_n^T]$ is a generator matrix of $\C$ and $\C$ is projective, then
$$A_w(\C(q|q^\ell) )=\left|\left \{ B \in \gf(q)^{\ell \times k}:\left|V_B \cap \{\bg_1, \ldots ,\bg_n\}\right|=n-w   \right \}\right|,$$
where $V_B=\{ \bg \in \gf(q)^k: B \bg^T =\bzero^T  \}$.
\end{enumerate}
\end{theorem}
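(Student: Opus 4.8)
The plan is to handle the three parts in order, building each on the machinery of the previous one. For part (1) I would fix the basis $\{\alpha_1,\ldots,\alpha_\ell\}$ and expand an arbitrary message vector $\bx\in\gf(q^\ell)^k$ coordinatewise, writing $x_i=\sum_{j=1}^\ell \alpha_j a_{ji}$ with $a_{ji}\in\gf(q)$. Collecting the $\gf(q)$-coefficients into vectors $\ba_j=(a_{j1},\ldots,a_{jk})\in\gf(q)^k$ and using linearity of $\bx\mapsto\bx G$, one gets $\bx G=\sum_{j=1}^\ell \alpha_j(\ba_j G)$ with each $\ba_j G\in\C$. The reverse inclusion is immediate since $\C\subseteq\C(q|q^\ell)$ and the lifted code is $\gf(q^\ell)$-linear. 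This establishes the claimed description and, more importantly, records the bijection $\bx\leftrightarrow(\ba_1,\ldots,\ba_\ell)$ that I will reuse in the other two parts.

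For part (2), the length is $n$ and the dimension is $k$ because a nonzero $k\times k$ minor of $G$ over $\gf(q)$ remains nonzero over $\gf(q^\ell)$, so $\rank$ is preserved under the field extension. The minimum distance and the formula for $A_d$ both flow from a single observation: writing a codeword as $\bc(q|q^\ell)=\sum_j\alpha_j\bc_j$ as in part (1), its $p$-th coordinate vanishes iff every $\bc_j$ vanishes at $p$ (linear independence of the $\alpha_j$ over $\gf(q)$, together with $c_{j,p}\in\gf(q)$), so $\support(\bc(q|q^\ell))=\bigcup_j\support(\bc_j)$. Hence every nonzero codeword has weight at least $d$, with equality achieved by $\alpha_1\bc$ for a weight-$d$ codeword $\bc\in\C$, giving $d(\C(q|q^\ell))=d$. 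For the count, a weight-$d$ codeword forces every nonzero $\bc_j$ to have one common support of size $d$; a short argument (subtracting a scalar multiple of one such $\bc_j$ from another to kill a coordinate) shows that weight-$d$ codewords with equal support are $\gf(q)$-proportional, so all $\bc_j$ are multiples of a single minimum-weight $\bc\in\C$ and $\bc(q|q^\ell)=u\bc$ with $u\in\gf(q^\ell)^*$. Counting then reduces to noting that the minimum-weight codewords of $\C$ split into $A_d(\C)/(q-1)$ one-dimensional $\gf(q)$-lines, each extending to a $\gf(q^\ell)$-line contributing $q^\ell-1$ minimum-weight codewords, and that distinct lines stay disjoint because a $\gf(q^\ell)$-scalar relating two vectors of $\gf(q)^n$ must in fact lie in $\gf(q)$. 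This yields $A_d(\C(q|q^\ell))=\frac{q^\ell-1}{q-1}A_d(\C)$.

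For part (3), I would use the bijection from part (1) to identify $\bx\in\gf(q^\ell)^k$ with the matrix $B\in\gf(q)^{\ell\times k}$ whose rows are $\ba_1,\ldots,\ba_\ell$. The $p$-th coordinate of $\bx G$ equals $\sum_j\alpha_j(\ba_j\cdot\bg_p)$, which vanishes iff $\ba_j\cdot\bg_p=0$ for all $j$, i.e. iff $B\bg_p^T=\bzero^T$, i.e. iff $\bg_p\in V_B$. Therefore $\wt(\bx G)=n-|\{p:\bg_p\in V_B\}|$. Here projectivity of $\C$ enters: it guarantees the columns $\bg_1,\ldots,\bg_n$ are distinct and nonzero, so the number of indices $p$ with $\bg_p\in V_B$ coincides with the set-cardinality $|V_B\cap\{\bg_1,\ldots,\bg_n\}|$. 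Transporting the equality $\wt(\bx G)=w \iff |V_B\cap\{\bg_1,\ldots,\bg_n\}|=n-w$ across the bijection $\bx\leftrightarrow B$ gives the stated formula.

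I expect the main obstacle to be the bookkeeping in part (2): making the ``equal support implies proportional'' step and the subsequent disjointness-and-counting of $\gf(q^\ell)$-lines fully rigorous without double counting, and pinning down exactly where linear independence of the basis is invoked. The only subtlety in part (3) is the precise point at which projectivity is needed, namely the passage from counting indices $p$ to counting distinct vectors in $V_B$, which I would flag explicitly.
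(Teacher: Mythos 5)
Your proposal follows essentially the same route as the paper's own proof: the basis expansion $\bc=\sum_j\alpha_j\bc_j$ for part (1), the support-union identity $\wt(\bc)=|\bigcup_j{\rm Supp}(\bc_j)|$ together with the equal-support-implies-$\gf(q)$-proportionality argument for part (2), and the identification of codewords with matrices $B\in\gf(q)^{\ell\times k}$ and the kernel sets $V_B$ (with projectivity guaranteeing distinct columns) for part (3). Your only additions are refinements the paper leaves implicit, namely the explicit line-counting argument yielding $A_d(\C(q|q^\ell))=\frac{q^\ell-1}{q-1}A_d(\C)$ and the minor-based rank argument for the dimension, both of which are correct.
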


\begin{proof}
Let $G$ be a generator matrix of $\C$. Then $\bc \in \C(q| q^\ell)$ if and only if $\bc=\bb G$, where $\bb \in \gf(q^\ell)^k$. For any $\bb \in \gf(q^\ell)^k$, there are $\bb_1,\bb_2, \ldots, \bb_\ell \in \gf(q)^k$ such that
$$\bb=\alpha_1 \bb_1+\alpha_2\bb_2+\cdots + \alpha_\ell \bb_\ell.
$$
It follows that $\bc=\alpha_1\bc_1+\alpha_2\bc_2+\cdots+\alpha_\ell \bc_\ell$, where $\bc_i=\bb_i G \in \C$. It is straightforward to see that
 $$\C(q | q^\ell)=\{\alpha_1 \bc_1+ \alpha_2 \bc_2+\cdots+  \alpha_\ell \bc_\ell:~\bc_i\in \C,~1\leq i\leq \ell \}.$$ The desired Result 1 follows.
 
For any $\bc=\alpha_1 \bc_1+ \alpha_2 \bc_2+\cdots+  \alpha_\ell \bc_\ell$, where $\bc_i=(c_{i1}, c_{i2},\ldots,c_{in})\in \C$, it is easily verified that 
 \begin{align}\label{eq:1}
 \wt( \bc)&=|{\rm Supp}(\bc_1)\cup {\rm Supp}(\bc_2) \cup \cdots \cup {\rm Supp}(\bc_\ell)|\\
& \geq \max_{1\leq i\leq \ell} \{\wt(\bc_i)\}\geq d, \notag
 \end{align}
where ${\rm Supp}(\bc_i):=\{j: ~c_{ij}\neq 0,~1\leq j\leq n \}$. Below we prove that $\wt(\bc)=d$ if and only if there exist $\bb \in \C$ with $\wt(\bb)=d$ and $u \in \gf(q^\ell)^*$ such that $\bc=u \bb$.  

Notice that $|{\rm Supp}(\bc_i)|\geq d$ for any $\bzero \neq \bc_i \in \C$. It follows from (\ref{eq:1}) that $\wt( \bc)=d$ if and only if there are $i_1, i_2, \ldots, i_t\in \{1, 2, \ldots, \ell\}$ such that 
 \begin{equation}\label{eq:2}
 {\rm Supp}(\bc_{i_1})={\rm Supp}(\bc_{i_2})=\cdots = {\rm Supp}(\bc_{i_t})	
 \end{equation} 
 with $|{\rm Supp}(\bc_{i_1})|=d$ and ${\rm Supp}(\bc_j)=\emptyset$ for $j \notin \{i_1, i_2, \cdots, i_t\} $, where $1\leq t\leq \ell$. 
\begin{itemize}
\item If $t=1$,	$\bc=\alpha_{i_1} \bc_{i_1}$. The desired result follows.
\item If $t\geq 2$, it follows from (\ref{eq:2}) and $|{\rm Supp}(\bc_{i_1})|=|{\rm Supp}(\bc_{i_2})|=\cdots=|{\rm Supp}(\bc_{i_t})|=d$ that there are $\lambda_2,\ldots,\lambda_t \in \gf(q)^*$ such that $\bc_{i_j}=\lambda_j \bc_{i_1}$ for $2\leq j\leq t$. Consequently, 
$$\bc=(\alpha_{i_1}+\lambda_2 \alpha_{i_2} +\cdots+\lambda_t \alpha_{i_t}) \bc_{i_1}.$$
\end{itemize}
In summary, $\bc \in \C(q|q^\ell)$ with $\wt(\bc)=d$ if and only if there exist $\bb \in \C$ with $\wt(\bb)=d$ and $u \in \gf(q^\ell)^*$ such that $\bc=u \bb$. The desired Result 2 follows. 

For any $\bc=\alpha_1 \bc_1+ \alpha_2 \bc_2+\cdots+  \alpha_\ell \bc_\ell$, where $\bc_i=(c_{i1}, c_{i2},\ldots,c_{in})\in \C$, it is also easily verified that
\begin{align}\label{eq:3}
\wt(\bc)&=n-|{\rm Supp}(\bc_1) \cap {\rm Supp}(\bc_2) \cap \cdots \cap {\rm Supp}(\bc_\ell) |\notag \\
&=n- |\{ j: ~c_{1j}=c_{2j}=\cdots=c_{\ell j}=0,~1\leq j\leq n\}|.	
\end{align}
Notice that $\bc_i= \ba_i G$ for some $\ba_i\in \gf(q)^k$ and $c_{1j}=c_{2j}=\cdots=c_{\ell j}=0$ if and only if $$\ba_1 \bg_j^T=\ba_2 \bg_j^T=\cdots=\ba_\ell \bg_j^T=0,$$
i.e., $\bg_j\in V_B$, where
$$B:=\begin{bmatrix}
\ba_1\\
\ba_2\\
\vdots\\
\ba_\ell	
\end{bmatrix}.
 $$ 
Since $\C$ is projective, $\bg_1, \ldots, \bg_n$ are pairwise distinct.  
 By (\ref{eq:3}), we get 
 \begin{equation}\label{eq:4}
 \wt(\bc)=n-\left| \{\bg_1,\bg_2, \ldots, \bg_n \} \cap V_B \right|.	
 \end{equation} 
 Therefore, $\wt(\bc)=w$ if and only if $\left|\{\bg_1,\bg_2, \ldots, \bg_n \} \cap V_B\right|=n-w.$
 It is easily verified that $\bc$ runs over each codeword in $\C(q|q^\ell)$ once when $B$ runs over each $\ell \times k$ matrix in $\gf(q)^{\ell \times k}$ once. By (\ref{eq:4}), we obtain
 \begin{equation}\label{eq:5}
 	 A_w(\C(q |q^\ell ))=\left|\left \{ B \in \gf(q)^{\ell \times k}:\left|\{\bg_1, \ldots ,\bg_n\} \cap V_B \right|=n-w   \right \}\right|.
 \end{equation} The desired Result 3 follows.  
\end{proof}

The part of Theorem \ref{thm-fundam2} about the dimension and minimum distance of $\C(q|q^\ell)$ was implied in  \cite[Theorem 7]{DT21},  but the rest parts of Theorem \ref{thm-fundam2} look new and will be used to check if some designs are new or not. Equations \eqref{eq:4} and  (\ref{eq:5}) are useful to settle the weight distribution of certain lifted codes.

\begin{theorem}\label{thm-fundam3}
Let $n $ and $\ell$ be positive integers and let $\C$ be an $[n, k]_q$ linear code.  
Then 
$$
\C(q|q^\ell)^\perp = \C^\perp (q|q^\ell). 
$$
\end{theorem}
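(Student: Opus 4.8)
The cleanest route is the parity-check characterization of the dual together with the dimension-preservation already recorded in Theorem \ref{thm-fundam2}. Let $G$ be a generator matrix of $\C$ and let $H$ be a parity-check matrix of $\C$; equivalently, $H$ is a generator matrix of $\C^\perp$, and $G H^{T}=\bzero$ over $\gf(q)$, with $\rank(G)=k$ and $\rank(H)=n-k$. By definition of lifting, $\C(q|q^\ell)$ is the code generated by $G$ over $\gf(q^\ell)$, so its dual $\C(q|q^\ell)^\perp$ is exactly $\{\bx\in\gf(q^\ell)^n : G\bx^{T}=\bzero^{T}\}$, while $\C^\perp(q|q^\ell)$ is the code generated by $H$ over $\gf(q^\ell)$.

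\textbf{Step 1 (inclusion).} The identity $GH^{T}=\bzero$ has all its entries in $\gf(q)\subseteq\gf(q^\ell)$, so it continues to hold verbatim over $\gf(q^\ell)$. Hence every row of $H$ satisfies the defining condition of $\C(q|q^\ell)^\perp$, and since $\C(q|q^\ell)^\perp$ is a $\gf(q^\ell)$-subspace closed under $\gf(q^\ell)$-linear combinations, the whole row span of $H$ over $\gf(q^\ell)$ lies inside it. This gives $\C^\perp(q|q^\ell)\subseteq\C(q|q^\ell)^\perp$. (Concretely, for $\bx=\bb H$ and $\by=\ba G$ with $\bb\in\gf(q^\ell)^{n-k}$ and $\ba\in\gf(q^\ell)^{k}$, one computes $\bx\by^{T}=\bb H G^{T}\ba^{T}=\bb(GH^{T})^{T}\ba^{T}=\bzero$.)

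\textbf{Step 2 (dimension count and conclusion).} Applying Theorem \ref{thm-fundam2} (Result 2) to $\C$ shows $\C(q|q^\ell)$ has dimension $k$ over $\gf(q^\ell)$, so $\C(q|q^\ell)^\perp$ has dimension $n-k$. Applying the same result to $\C^\perp$, whose dimension is $n-k$, shows $\C^\perp(q|q^\ell)$ also has dimension $n-k$ over $\gf(q^\ell)$. An inclusion of $\gf(q^\ell)$-subspaces of equal finite dimension is an equality, which yields $\C(q|q^\ell)^\perp=\C^\perp(q|q^\ell)$. There is no real obstacle here; the only point requiring a word of care is the dimension bookkeeping, namely that the rank of $G$ (and of $H$) is unchanged when passing to the extension field $\gf(q^\ell)$. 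This is immediate because rank is detected by the nonvanishing of minors, and a minor that is nonzero in $\gf(q)$ remains nonzero in the larger field $\gf(q^\ell)$; this is precisely what lets Theorem \ref{thm-fundam2} be invoked for both $\C$ and $\C^\perp$.
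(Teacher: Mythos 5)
Your proof is correct and follows essentially the same route as the paper: the paper likewise applies Theorem \ref{thm-fundam2} to get $\dim \C(q|q^\ell)^\perp = n-k$ and then observes that a parity-check matrix of $\C$ (i.e., a generator matrix of $\C^\perp$, hence of $\C^\perp(q|q^\ell)$) generates $\C(q|q^\ell)^\perp$. You have merely made explicit the orthogonality and rank-preservation details that the paper leaves implicit.
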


\begin{proof}
By Theorem \ref{thm-fundam2}, $\C(q|q^\ell)^\perp$ has dimension $n-k$. On the other hand, 
$\C^\perp$ has also dimension $n-k$. Consequently, any parity check matrix of  
$\C$ is also a generator matrix of $\C(q|q^\ell)^\perp$. The desired conclusion then follows.   
\end{proof} 

The following theorem provides a way to compute the weight enumerator of a lifted code $\C(q|q^\ell)$ 
\cite{HKM76}. 

\begin{theorem}\label{thm-HKM76} \cite{HKM76}
Let $\C$ be an $[n, k]_q$ linear code and let $\ell$ be a positive integer.  Then the lifted code $\C(q|q^\ell)$ has weight enumerator 
\begin{eqnarray}\label{eqn-HKM76}
A(\C(q|q^\ell))(z) = 1+\sum_{i=1}^n \sum_{j=1}^k N_{i}^{(j)} (q^\ell -1)(q^\ell -q) \cdots (q^\ell -q^{j-1})z^i, 
\end{eqnarray} 
where $N_{i}^{(j)}$ is the number of $(k-j)$-dimensional subspaces of $\gf(q)^k$ which contain exactly 
$n-i$ of the $n$ columns of a generator matrix $G$ of $\C$.   
\end{theorem} 

In theory, the formula in \eqref{eqn-HKM76} can be employed to compute the weight enumerator of not only 
a lifted code $\C(q|q^\ell)$ but also its original code $\C$.  In other words, it gives a way to compute the weight 
enumerator of every linear code over every finite field.  But the difficulty to use this formula lies in the computation of the coefficients $N_{i}^{(j)}$. In some cases, it is possible to use this formula to settle 
the weight enumerator of some lifted codes \cite{SLH24}.  

\section{The support $2$-designs of the lifted projective Reed-Muller codes} 

In this section, we will introduce the projective Reed-Muller codes and study the $2$-designs 
supported by the lifted projective Reed-Muller codes.  

Let $m \geq 2$ be an integer. 
 A point of the projective geometry $\PG(m-1, \gf(q))$ is given in homogeneous coordinates by $(x_1,x_2,\dots, x_{m})$ where all $x_i$ are in $\mathrm{GF}(q)$ and are not all zero. Every point of $\PG(m-1, \gf(q))$ has $q-1$ coordinate representations, as $(ax_1, ax_2,\dots,ax_{m})$ and $(x_1, x_2,\dots,x_{m})$ generate the same $1$-dimensional subspace of $\mathrm{GF}(q)^{m}$ for any nonzero $a\in \mathrm{GF}(q)$. 

Let $\gf(q)[x_1, x_2,\dots, x_m]$ denote the set of polynomials in $m$ indeterminates over $\gf(q)$, which is a linear space over $\gf(q)$. Let $A(q, m, h)$ denote the subspace of $\gf(q)[x_1, x_2,\dots, x_m]$ generated by all the homogeneous polynomials of degree $h$. Let $n=(q^m-1)/(q-1)$ and let $\{\mathbf{x}^1, \mathbf{x}^2, \ldots, \mathbf{x}^{n}\}$ be a set of projective points in $\mathrm{PG}(m-1,\gf(q))$. Then the \emph{$h$-th order projective Reed-Muller code}\index{projective Reed-Muller code} $\mathrm{PRM}(q, m, h)$ of length $n$ is defined by 
\begin{eqnarray*} 
\mathrm{PRM}(q,m,h)=\left \{\left (f(\mathbf{x}^1),f(\mathbf{x}^2), \dots, f(\mathbf{x}^n) \right ): f\in A(q, m, h) \right \}.
\end{eqnarray*} 
The code $\mathrm{PRM}(q,m,h)$ depends on the choice of the set  $\{\mathbf{x}^1,\mathbf{x}^2, \ldots, \mathbf{x}^{n}\}$ of coordinate representatives of the point set in $\mathrm{PG}(m-1,\gf(q))$, but is unique up to the monomial equivalence. The parameters of $\mathrm{PRM}(q,m,h)$ and $\mathrm{PRM}(q, m, h)^{\bot}$ are known and documented in the following theorems \cite{BR14,Lachaud,Sorensen}. 

\begin{theorem}\label{thm-PRMcode7} 
Let $m \geq 2$ and $1 \leq h \leq (m-1)(q-1)$. Then the linear code $\mathrm{PRM}(q,m,h)$ has length $n=(q^m-1)/(q-1)$ and minimum Hamming distance $(q-v)q^{m-2-u}$, where $h-1=u(q-1)+v$ and $0 \leq v <q-1$. Furthermore,  the dimension of the code is given as 
\begin{eqnarray*}\label{eqn-PRMcode7}
\dim(\mathrm{PRM}(q,m,h))=\sum_{t \equiv h \pmod{q-1} \atop 0 < t \leq h} \left(  \sum_{j=0}^m (-1)^j \binom{m}{j} \binom{t-jq+m-1}{t-jq} \right).   
\end{eqnarray*}
\end{theorem}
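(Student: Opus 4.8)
The plan is to treat the three assertions separately, since they require different tools. The length is immediate: by construction $\PRM(q,m,h)$ is obtained by evaluating forms at the $n$ coordinate representatives of the points of $\PG(m-1,\gf(q))$, and the number of such points is exactly $(q^m-1)/(q-1)$. For the dimension I would exhibit an explicit monomial basis. The evaluation map $\mathrm{ev}\colon A(q,m,h)\to \gf(q)^n$, $f\mapsto (f(\bx^1),\dots,f(\bx^n))$, is $\gf(q)$-linear and surjects onto $\PRM(q,m,h)$, so $\dim \PRM(q,m,h)=\dim A(q,m,h)-\dim\ker(\mathrm{ev})$. Its kernel consists of the degree-$h$ forms vanishing at every chosen representative; since $f(a\bx)=a^h f(\bx)$, such an $f$ vanishes on every nonzero point of the corresponding line, hence on all of $\gf(q)^m\setminus\{\bzero\}$. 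First I would reduce each monomial $x_1^{a_1}\cdots x_m^{a_m}$ using the functional identity $x_i^q=x_i$ on $\gf(q)$ together with homogeneity, replacing a factor $x_i^q$ by $x_i$ at the cost of dropping the total degree by $q-1$; thus every class is represented by a monomial with all exponents in $\{0,1,\dots,q-1\}$ and total degree $t$ satisfying $t\equiv h\pmod{q-1}$ and $0<t\le h$. The decisive step is to prove that these reduced monomials are linearly independent as functions on the point set, which I would do by a vanishing argument that evaluates a hypothetical linear dependence on well-chosen points. Granting this, the dimension is the number of such reduced monomials, and for each fixed total degree $t$ the count of exponent vectors $(a_1,\dots,a_m)$ with $\sum_i a_i=t$ and $0\le a_i\le q-1$ is, by inclusion--exclusion on the events $a_i\ge q$, exactly the inner sum $\sum_{j=0}^m(-1)^j\binom{m}{j}\binom{t-jq+m-1}{t-jq}$. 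Summing over the admissible $t$ yields the stated formula.

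For the minimum distance I would translate weights into numbers of projective zeros: a nonzero codeword coming from a form $f$ has weight $n-Z(f)$, where $Z(f)$ is the number of points of $\PG(m-1,\gf(q))$ on the hypersurface $f=0$, so that $d=n-\max_f Z(f)$, the maximum taken over degree-$h$ forms not vanishing on all of $\PG(m-1,\gf(q))$. Writing $h-1=u(q-1)+v$ with $0\le v<q-1$, the lower bound $d\ge (q-v)q^{m-2-u}$ amounts to the sharp upper bound on the number of $\gf(q)$-rational points of a degree-$h$ hypersurface, which I would obtain by induction on $m$ using the decomposition of $\PG(m-1,\gf(q))$ into an affine chart $\gf(q)^{m-1}$ and a hyperplane at infinity $\PG(m-2,\gf(q))$: restricting $f$ to the affine chart produces a codeword of the affine Reed--Muller code $\RM(q,m-1,h)$, whose minimum distance is known, while the restriction to infinity is governed by $\PRM(q,m-1,h)$, and combining the two estimates propagates the bound. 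For the matching upper bound on $d$ I would construct an extremal form attaining $(q-v)q^{m-2-u}$ zeros, modeled on the extremal codewords of the affine Reed--Muller codes, as a product of $u$ linear-form factors forcing vanishing along $u$ coordinate directions times one further factor of degree $v+1$; a direct count of its projective zeros then shows the bound is tight.

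The main obstacle I anticipate is the minimum-distance computation, specifically the sharp upper bound on the number of rational points of a degree-$h$ hypersurface in $\PG(m-1,\gf(q))$, equivalently the lower bound on $d$. Unlike the affine case, the homogeneity and the identification of scalar multiples make the induction delicate, since one must control the affine restriction and the behaviour at infinity simultaneously and ensure the two contributions do not overcount the zeros lying on the hyperplane at infinity. The linear-independence step in the dimension argument is a second, milder obstacle: it is routine once the right set of evaluation points is chosen, but it still requires care to rule out nontrivial relations among the reduced monomials whose total degree is close to $h$.
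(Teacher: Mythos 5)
The paper itself never proves this theorem: it is quoted as known and attributed to Lachaud, S{\o}rensen and Ballet--Rolland, so there is no internal proof to compare yours against, and your proposal must stand on its own; in substance it is a plan for reconstructing S{\o}rensen's original argument. The length claim is indeed immediate, and your dimension argument is sound. In fact the step you call decisive needs no ``well-chosen points'': every monomial surviving your reduction has degree $t\equiv h\pmod{q-1}$ with $t>0$, so a linear relation that vanishes at the $n$ chosen representatives vanishes at every point of $\gf(q)^m\setminus\{\bzero\}$ (replacing a representative $\bx$ by $a\bx$ with $a\neq 0$ multiplies every term by the same nonzero factor $a^{t}$, because all the degrees are congruent modulo $q-1$), and it vanishes at $\bzero$ because every $t$ is positive; since the $q^m$ monomials with all exponents at most $q-1$ are a basis of the space of all maps $\gf(q)^m\to\gf(q)$, all coefficients are zero. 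With that, your inclusion--exclusion count finishes the dimension formula correctly, and it also explains why the constraint $t>0$ in the statement is exactly right.

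The minimum distance, however, contains a genuine gap, not just an anticipated difficulty. First, a concrete slip: your extremal form has degree $u+(v+1)\neq h=u(q-1)+v+1$ unless $q=2$; what works is $u$ factors of degree $q-1$ together with $v+1$ linear forms, for instance
$$
f=x_m\prod_{i=1}^{u}\left(x_i^{q-1}-x_m^{q-1}\right)\prod_{j=1}^{v}\left(x_{u+1}-b_jx_m\right),
$$
with $b_1,\dots,b_v$ distinct in $\gf(q)$, whose nonzeros are the points $(0,\dots,0,x_{u+1},\dots,x_{m-1},1)$ with $x_{u+1}\notin\{b_1,\dots,b_v\}$, exactly $(q-v)q^{m-2-u}$ of them; this gives $d\le(q-v)q^{m-2-u}$. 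Second, and more seriously, the lower-bound induction as you describe it cannot close. Writing $h=u(q-1)+(v+1)$, the minimum distance of the affine code $\RM_q(h,m-1)$ is $(q-v-1)q^{m-2-u}$ when $v+1<q-1$, and $q^{m-2-u}$ when $v=q-2$; in every case this is strictly smaller than the target $(q-v)q^{m-2-u}$. Since genuine minimum-weight codewords (such as $f$ above) are supported entirely in the affine chart, ``restrict to the chart and invoke the affine bound'' can never reach the sharp value. Symmetrically, there exist nonzero codewords supported entirely on the hyperplane at infinity (e.g.\ the evaluation of $x_1^{q}-x_1x_m^{q-1}$ when $h=q$), and for these the inductive bound from $\PRM(q,m-1,h)$ gives only $(q-v)q^{m-3-u}$, again too weak. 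The missing idea is structural: one must split into cases according to whether the form vanishes identically on the hyperplane at infinity (respectively on the chart), and use that vanishing to lower the effective degree of the restriction after reduction modulo the field equations --- e.g.\ a form vanishing at infinity restricts to the chart as a function of a polynomial of degree at most $h-1=u(q-1)+v$, for which the affine bound is exactly $(q-v)q^{m-2-u}$. Carrying out this case analysis is the actual content of S{\o}rensen's proof, and it is precisely what your sketch leaves out; this, presumably, is why the paper cites the result rather than reproving it.
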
  

\begin{theorem}\label{thm-prm8}
Let $m\geq 2$ and $1\leq h\leq (m-1)(q-1)$. If $h\not \equiv 0\pmod{q-1}$, then $$\mathrm{PRM}(q, m, h)^{\bot}=\mathrm{PRM}(q, m, (m-1)(q-1)-h).$$
\end{theorem}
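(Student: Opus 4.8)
The plan is to split the identity into an orthogonality statement and a dimension count, which together force equality. Write $h' = (m-1)(q-1)-h$; since $1 \le h \le (m-1)(q-1)$ and $h \not\equiv 0 \pmod{q-1}$, one checks immediately that $1 \le h' \le (m-1)(q-1)$ and $h' \equiv -h \not\equiv 0 \pmod{q-1}$, so $\mathrm{PRM}(q,m,h')$ is a genuine code to which Theorems \ref{thm-PRMcode7} and the dimension formula apply.

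First I would prove the containment $\mathrm{PRM}(q,m,h) \subseteq \mathrm{PRM}(q,m,h')^{\perp}$. By bilinearity of the standard inner product it suffices to kill $\sum_{i=1}^{n} (x^{a})(\mathbf{x}^i)\,(x^{b})(\mathbf{x}^i)$ for a monomial $x^a$ of degree $h$ and a monomial $x^b$ of degree $h'$, where $\mathbf{x}^1,\dots,\mathbf{x}^n$ are the fixed coordinate representatives. The key device is to convert this projective sum into an affine one over $\gf(q)^m$: grouping the $q^m-1$ nonzero vectors into the $n$ projective lines and using that $x^{a+b}$ is homogeneous of degree $(m-1)(q-1)$ gives $\sum_{v \neq \mathbf{0}} x^{a+b}(v) = \big(\sum_{\lambda \in \gf(q)^{*}} \lambda^{(m-1)(q-1)}\big)\sum_{i=1}^{n} x^{a+b}(\mathbf{x}^i)$, and since $(m-1)(q-1)\equiv 0 \pmod{q-1}$ the scalar factor equals $q-1=-1$ in $\gf(q)$. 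As $x^{a+b}(\mathbf{0})=0$, the left side equals $\sum_{v \in \gf(q)^m} x^{a+b}(v) = \prod_{j=1}^{m}\sum_{x\in\gf(q)} x^{a_j+b_j}$. A factor $\sum_{x\in\gf(q)} x^{c}$ is nonzero only when $c$ is a positive multiple of $q-1$; since $\sum_j (a_j+b_j)=(m-1)(q-1)$ the exponents cannot all be $\ge q-1$, so at least one factor is zero and the inner product vanishes. Note that this step never uses the hypothesis $h \not\equiv 0$.

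Then I would match dimensions, proving $\dim \mathrm{PRM}(q,m,h) + \dim \mathrm{PRM}(q,m,h') = n$, which with the containment gives equality of codes. The crucial observation is that the inner summand $\sum_{j=0}^{m}(-1)^j \binom{m}{j}\binom{t-jq+m-1}{t-jq}$ appearing in Theorem \ref{thm-PRMcode7} is exactly $N(t)$, the number of monomials $x_1^{a_1}\cdots x_m^{a_m}$ with $0 \le a_j \le q-1$ and $\sum_j a_j = t$, by inclusion--exclusion on the constraints $a_j \le q-1$. Writing $h = u(q-1)+r$ with $1 \le r \le q-2$, the reflection $a_j \mapsto (q-1)-a_j$ yields the symmetry $N(t)=N(m(q-1)-t)$; applying it to the dimension sum for $h'$ lets me combine the two staircase-shaped index sets into the single sum $\sum_{i=0}^{m-1} N\!\big(r+i(q-1)\big)$, which because $0 < r < q-1$ runs over \emph{every} positive $t \le m(q-1)$ with $t \equiv r \pmod{q-1}$. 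A roots-of-unity filter then evaluates $\sum_{t \equiv r} N(t) = \tfrac{1}{q-1}(q^m-1)$, using that each nontrivial character $\omega$ satisfies $\sum_{a=0}^{q-1}\omega^{ka}=1$ for $k\neq 0$; here $r \neq 0$, i.e. $h \not\equiv 0 \pmod{q-1}$, is precisely what makes the residue class total equal $(q^m-1)/(q-1)=n$.

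The main obstacle is the dimension bookkeeping: recognizing the binomial expression as a reduced-monomial count, and then correctly pairing the two index sets under the reflection so that they tile one complete nonzero congruence class modulo $q-1$. The orthogonality step, by contrast, is a short exponential-sum computation once one thinks to pass from the projective sum to the affine sum. I would also keep the hypothesis $h \not\equiv 0 \pmod{q-1}$ in view throughout, since it is invisible in the orthogonality argument but indispensable in the counting argument; when $q-1 \mid h$ the all-one codeword enters the picture and the clean duality breaks down.
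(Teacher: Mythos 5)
Your proposal is correct, but there is nothing in the paper to compare it against: the paper states Theorem \ref{thm-prm8} without proof, citing it as a known result from \cite{BR14,Lachaud,Sorensen}. Your argument is in fact a complete, self-contained proof (modulo the dimension formula of Theorem \ref{thm-PRMcode7}, which the paper also takes as given), and it follows the same strategy as the classical proof in S{\o}rensen's paper: orthogonality plus a dimension count. Both halves check out. For orthogonality, passing from the projective sum to the affine sum costs only the factor $\sum_{\lambda\in\gf(q)^*}\lambda^{(m-1)(q-1)}=q-1=-1\neq 0$, and the vanishing of $\prod_{j}\sum_{x\in\gf(q)}x^{a_j+b_j}$ follows because the exponents sum to $(m-1)(q-1)<m(q-1)$, so they cannot all be positive multiples of $q-1$; this works for any fixed choice of coordinate representatives, which matters since $\PRM(q,m,h)$ is only defined up to that choice. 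For the dimension count, your identification of the inner summand in Theorem \ref{thm-PRMcode7} with the bounded-composition count $N(t)$ is correct by inclusion--exclusion, the reflection $a_j\mapsto(q-1)-a_j$ does map the index set $\{r'+i(q-1):0\le i\le m-2-u\}$ for $h'$ onto $\{r+i(q-1):u+1\le i\le m-1\}$, so the two sums tile the full residue class, and the roots-of-unity filter then gives exactly $(q^m-1)/(q-1)=n$, using $r\not\equiv 0\pmod{q-1}$. Inclusion plus matching dimensions forces equality. You also correctly isolate the role of the hypothesis $h\not\equiv 0\pmod{q-1}$: it is invisible in the orthogonality step and indispensable in the counting step, which is exactly where the duality fails when $q-1$ divides $h$.
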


By Theorem \ref{thm-PRMcode7} and definition, $\mathrm{PRM}(q,m,1)$ 
is monomially-equavalent to the Simplex code. 
It then follows from Theorem \ref{thm-prm8} that 
$\mathrm{PRM}(q,m,(m-1)(q-1)-1)$ 
is monomially-equavalent to the Hamming code.  
Thus, the Hamming codes and Simplex codes are special projective Reed-Muller codes. 

It was pointed out in \cite{BM01,Sorensen} that the code $\mathrm{PRM}(q,m,h)$ is not cyclic in general, but is equivalent to a cyclic code if $\gcd(m, q-1)=1$ or $h \equiv 0 \pmod{q-1}$.  It was proved in \cite{SDW24} that 
every projective Reed-Muller code  $\mathrm{PRM}(q,m,h)$ is a constacyclic code.

For a linear code $\C$, define
$$
\pi \GAut(\C)=\{P: DP\gamma \in \GAut(\C)\} 
$$
and 
$$
\pi \MAut(\C)=\{P: DP \in \MAut(\C)\}.  
$$
By definition 
\begin{eqnarray}
\pi \MAut(\C) \subseteq \pi \GAut(\C). 
\end{eqnarray}

The following result was proved in \cite{Berger02} and will be very useful for studying the designs supported by the projective Reed-Muller codes, the Hamming and Simplex codes and the lifted codes of these codes. 

\begin{lemma}\label{lem-berger}
Choose the point set $\{\mathbf{x}^1, \mathbf{x}^2, \ldots, \mathbf{x}^{n}\}$ of $\mathrm{PG}(m-1,\gf(q))$ 
to index the coordinates of the codewords in $\PRM(q,m,h)$.    Then 
$$
\PGL_m(\gf(q)) \subseteq \pi \MAut(\PRM(q,m,h)) \subseteq \pi \GAut(\PRM(q,m,h)), 
$$
where $\PGL_m(\gf(q)) $ denotes the projective general linear group \cite[Chapter 1]{DTbook22}.
\end{lemma}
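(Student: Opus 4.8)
The plan is to prove the first inclusion $\PGL_m(\gf(q)) \subseteq \pi\MAut(\PRM(q,m,h))$ directly, since the second inclusion $\pi\MAut(\PRM(q,m,h)) \subseteq \pi\GAut(\PRM(q,m,h))$ is immediate from the containment $\MAut(\C) \subseteq \GAut(\C)$ recorded in the introduction. The conceptual core is that a linear change of coordinates on $\PG(m-1,\gf(q))$ simultaneously permutes the evaluation points and preserves the space $A(q,m,h)$ of homogeneous polynomials of degree $h$; consequently it induces a monomial automorphism of $\PRM(q,m,h)$ whose permutation part is exactly the corresponding element of $\PGL_m(\gf(q))$.

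First I would fix $A \in \GL_m(\gf(q))$ and record its two effects. Since $\mathbf{x} \mapsto A\mathbf{x}$ is a bijection of $\gf(q)^m$ commuting with scalar multiplication, it induces a permutation $\sigma = \sigma_A$ of the projective points: for each $j$ there are a unique index $\sigma(j)$ and a scalar $\lambda_j \in \gf(q)^*$ with $A\mathbf{x}^j = \lambda_j \mathbf{x}^{\sigma(j)}$. Independently, the substitution $f \mapsto f\circ A$, defined by $(f\circ A)(\mathbf{x}) = f(A\mathbf{x})$, sends a homogeneous polynomial of degree $h$ to another homogeneous polynomial of degree $h$; being a ring automorphism with inverse $g \mapsto g\circ A^{-1}$, it restricts to a linear bijection of $A(q,m,h)$.

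Next I would combine the two effects. Writing $\mathbf{c}_f = (f(\mathbf{x}^1),\ldots,f(\mathbf{x}^n))$ for the codeword of $f \in A(q,m,h)$, and setting $g = f\circ A \in A(q,m,h)$, homogeneity gives
$$
g(\mathbf{x}^j) = f(A\mathbf{x}^j) = f(\lambda_j \mathbf{x}^{\sigma(j)}) = \lambda_j^{\,h}\, f(\mathbf{x}^{\sigma(j)}).
$$
Hence $\mathbf{c}_g$ is obtained from $\mathbf{c}_f$ by the permutation $\sigma$ together with the coordinatewise scalings $\lambda_j^{\,h}$, that is, by a fixed monomial matrix $M_A$ depending only on $A$ and the chosen representatives, not on $f$. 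Because $f \mapsto f\circ A$ runs bijectively over $A(q,m,h)$, the transformation $M_A$ maps $\PRM(q,m,h)$ onto itself, so $M_A \in \MAut(\PRM(q,m,h))$ and its permutation part $\sigma_A$ lies in $\pi\MAut(\PRM(q,m,h))$. As $A$ ranges over $\GL_m(\gf(q))$, the permutation $\sigma_A$ ranges over the full image of the projective action, whose kernel is precisely the scalar matrices; this realizes $\PGL_m(\gf(q))$ as a subgroup of $\Sym_n$ contained in $\pi\MAut(\PRM(q,m,h))$, which is the claimed inclusion. (Whether the prescribed action convention $M(\mathbf{c}) = \mathbf{c}DP^{-1}$ produces $\sigma_A$ or $\sigma_A^{-1}$ is immaterial, since $\PGL_m(\gf(q))$ is a group.)

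I expect the only delicate point to be bookkeeping rather than substance: one must confirm that the scalars $\lambda_j^{\,h}$ assemble into the diagonal part of a genuine monomial matrix and that this matrix is common to all codewords — which holds because $\sigma$ and the $\lambda_j$ are determined by $A$ alone. Note also that working at the level of polynomials in $\gf(q)[x_1,\ldots,x_m]$ (as $A(q,m,h)$ is defined) sidesteps any polynomial-versus-function subtleties, since $f\mapsto f\circ A$ preserves the degree-$h$ homogeneous part formally. The rest is routine.
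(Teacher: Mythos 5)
Your proof is correct, but note that the paper does not actually prove this lemma: it is quoted as a known result, with the proof attributed entirely to the cited work of Berger on automorphism groups of projective Reed--Muller codes, so there is no in-paper argument to compare against. Your substitution argument --- fix $A\in\GL_m(\gf(q))$, observe that $f\mapsto f\circ A$ is a linear bijection of $A(q,m,h)$, and use homogeneity to get $(f\circ A)(\mathbf{x}^j)=\lambda_j^{\,h}f(\mathbf{x}^{\sigma(j)})$, so that the fixed monomial matrix $M_A$ (permutation part $\sigma_A$, diagonal entries $\lambda_j^{\,h}$) maps the code onto itself --- is the standard way to establish the stated inclusion, and you handle the genuinely relevant details: the scalars $\lambda_j$ arise only from the choice of projective representatives and depend on $A$ alone, not on $f$; bijectivity of $f\mapsto f\circ A$ on $A(q,m,h)$ gives $M_A(\PRM(q,m,h))=\PRM(q,m,h)$ at the level of sets, without needing injectivity of $f\mapsto\mathbf{c}_f$; and the kernel of $A\mapsto\sigma_A$ being the scalar matrices identifies the image of the action with $\PGL_m(\gf(q))$. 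The second inclusion is, as you say, immediate from $\MAut(\C)\subseteq\GAut(\C)$, which the paper records just before the lemma. What the citation buys that your argument does not is the much harder converse direction: Berger's result determines the full (monomial and semilinear) automorphism groups of these codes, i.e., that essentially nothing beyond the projective (semi)linear group occurs; but none of that is needed for the one-sided inclusion as stated, so your elementary self-contained argument fully suffices here.
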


\begin{theorem}\label{thm-designresult191} 
Let $m \geq 2$ and $\ell \geq 1$.   
For each positive integer $i$ with $A_i(\PRM(q,m,h)(q|q^\ell))>0$, the incidence structure 
$\bD_i(\PRM(q,m,h)(q|q^\ell))$ is a $2$-$(n,i,\lambda_i)$ design for some $\lambda_i$.   
\end{theorem}

\begin{proof}
It is known that $\PGL_m(\gf(q))$ is doubly transitive \cite[Chapter 1]{DTbook22}.  The desired conclusion then follows from 
Lemma \ref{lem-berger} and Theorems \ref{thm-designCodeAutm} and \ref{thm-main}.   
\end{proof}

\begin{open}\label{open-liftedPRMcodes} 
Determine the weight enumerator of the lifted codes $\PRM(q,m,h)(q|q^\ell)$. 
\end{open} 

The weight distribution of $\mathrm{PRM}(q,m,2)$ was settled in \cite{Lisx19}. 
Therefore, the weight distribution of $\mathrm{PRM}(q,m,(m-1)(q-1)-2)$ is known. 
But the weight distribution of $\mathrm{PRM}(q,m,h)$ is open for $2<h < (m-1)(q-1)-2$.  
The following open problem is harder. 

\begin{open} 
Determine the values of $i$ and $\lambda_i$ for the $2$-$(n,i, \lambda_i)$ designs in Theorem \ref{thm-designresult191}. 
\end{open}

\section{The support $2$-designs of the lifted Simplex and Hamming codes}

Throughout this section, let $n=(q^{m}-1)/(q-1)$ with $m \geq 2$ being an integer. 

\subsection{The support designs of Hamming and Simplex codes over finite fields} 

There are several different constructions of the Hamming and Simplex codes. The first one is a trace construction  
of  the Hamming and Simplex codes. Let $\alpha$ be a generator of $\gf(q^m)^*$.  
Define 
$$ 
\Delta_i:=\alpha^i \gf(q)^*=\{\alpha^i a: a \in \gf(q)^*\}
$$  
for all $i$ with $0 \leq i \leq n-1$. Clearly, the set $\{\Delta_i: 0 \leq i \leq n-1\}$ forms a partition of $\gf(q^m)^*$. 
Let $i$ and $j$ be a pair of distinct elements in the set $\{0, 1, \ldots, n-1\}$. Then any 
$a \in \Delta_i$ and $b \in \Delta_j$ must be linearly independent over $\gf(q)$. 

Let $b_i \in \Delta_i$ for each $i$ with $0 \leq i \leq n-1 $.  Define 
\begin{eqnarray}
\cS_{(q,m)}(b_0, \ldots, b_{n-1})=\left\{\left(\tr_{q^m/q}(ab_i)\right)_{i=0}^{n-1}: a \in \gf(q^m) \right\}. 
\end{eqnarray} 
Then the set $\cS_{(q,m)}(b_0, \ldots, b_{n-1})$ defined above  is a Simplex code over $\gf(q)$ with parameters 
$[(q^m-1)/(q-1), m, q^{m-1}]$ and weight enumerator $1+(q^m-1)z^{q^{m-1}}$. By definition, different choices of 
the vector $(b_0, \ldots, b_{n-1})$  in the set $ \Delta_0 \times \cdots  \times \Delta_{n-1}$ 
result in monomially-equivalent Simplex codes.  It is well known that monomially-equivalent codes have the same parameters and same weight enumerator. Therefore, up to monomial equivalence, Simplex codes are unique, and are 
denoted by $\cS_{(q,m)}$ in this paper.   

The dual code of any Simplex code $\cS_{(q,m)}(b_0, \ldots, b_{n-1})$ is referred to as a Hamming code, denoted by 
$\cH_{(q,m)}(b_0, \ldots, b_{n-1})$. Thus, all the Hamming codes $\cH_{(q,m)}(b_0, \ldots, b_{n-1})$ are  monomially-equivalent and unique up to monomial equivalence. Hence, we 
denote them  by $\cH_{(q,m)}$. It is well known that $\cH_{(q,m)}$ has 
parameters $[(q^m-1)/(q-1), (q^m-1)/(q-1)-m, 3]$. 

The second construction of the Hamming and Simplex codes is by matrix.   
A parity check matrix $H_{(q,m)}$ of the \emph{Hamming code\index{Hamming code}} $\cH_{(q,m)}$ over $\gf(q)$ 
 is defined by choosing its columns as the nonzero vectors from 
all the one-dimensional subspaces of $\gf(q)^m$. In terms of finite geometry, the columns of $H_{(q,m)}$ are the 
points of the projective geometry $\PG(m-1, \gf(q))$ \cite[Section 1.8]{DTbook22}. Hence, $\cH_{(q,m)}$ has length $n=(q^m-1)/(q-1)$ and dimension 
$n-m$. By definition, no two columns of $H_{(q,m)}$ are linearly dependent over $\gf(q)$. The minimum weight of $\cH_{(q,m)}$ 
is at least 3. Adding two nonzero vectors from two different one-dimensional subspaces gives a nonzero vector
from a third one-dimensional space. Therefore,  $\cH_{(q,m)}$ has minimum weight 3. 
It is also well known that any $[(q^m-1)/(q-1), (q^m-1)/(q-1)-m, 3]$ code over $\gf(q)$ is monomially equivalent to the Hamming code $\cH_{(q,m)}$ \cite[Theorem 1.8.2]{HP03}. 

The third one is a contacyclic code construction of the Hamming and Simplex codes \cite{SDW24}.  The fourth one is the following.  

\begin{theorem}\label{thm-july191}
Each Simplex code is a projective Reed-Muller code $\mathrm{PRM}(q,m,1)$ and each Hamming code 
is a projective Reed-Muller code $\mathrm{PRM}(q,m,(m-1)(q-1)-1)$. 
\end{theorem}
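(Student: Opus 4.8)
The plan is to handle the two claims separately, reducing each to a property of the projective Reed--Muller codes already recorded above. For the Simplex claim I would start from the generator matrix of $\PRM(q,m,1)$ attached to the monomial basis $\{x_1,\dots,x_m\}$ of the space $A(q,m,1)$ of linear forms: its $(i,j)$ entry is the $i$-th coordinate of the representative $\mathbf{x}^j$, so its $m$ columns are exactly the chosen representatives $\mathbf{x}^1,\dots,\mathbf{x}^n$ of the points of $\PG(m-1,\gf(q))$, i.e.\ one nonzero vector from each one-dimensional subspace of $\gf(q)^m$. This is precisely the shape of the defining matrix of a Simplex code $\cS_{(q,m)}$ (equivalently, of the parity-check matrix $H_{(q,m)}$ of $\cH_{(q,m)}$). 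Since any two such matrices have columns that agree up to scalars, they generate monomially equivalent codes; together with the uniqueness of Simplex codes up to monomial equivalence this identifies $\PRM(q,m,1)$ with $\cS_{(q,m)}$. Theorem \ref{thm-PRMcode7} at $h=1$ serves as a consistency check, returning the parameters $[(q^m-1)/(q-1),m,q^{m-1}]$.

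For the Hamming claim I would argue by duality. By definition $\cH_{(q,m)}$ is the dual of a Simplex code, so by the first part it is $\PRM(q,m,1)^{\perp}$; applying Theorem \ref{thm-prm8} with $h=1$ gives $\PRM(q,m,1)^{\perp}=\PRM(q,m,(m-1)(q-1)-1)$, which is the assertion. I would keep a self-contained alternative in reserve: read off from Theorem \ref{thm-PRMcode7} that $\PRM(q,m,(m-1)(q-1)-1)$ has length $(q^m-1)/(q-1)$, dimension $(q^m-1)/(q-1)-m$ and minimum distance $3$, and then invoke the uniqueness of Hamming codes among $[(q^m-1)/(q-1),(q^m-1)/(q-1)-m,3]$ codes up to monomial equivalence, as quoted earlier from \cite[Theorem 1.8.2]{HP03}.

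The step I expect to cause trouble is the binary case $q=2$. Theorem \ref{thm-prm8} is stated only under $h\not\equiv 0 \pmod{q-1}$, and for $h=1$ this forces $q\geq 3$; when $q=2$ every exponent is $\equiv 0 \pmod{q-1}$, the clean duality fails, and indeed $(m-1)(q-1)-1=m-2$ does not yield the Hamming code (for $m=3$ it collapses back to the Simplex order $1$). I would therefore either state the Hamming half for $q\geq 3$ or supply a direct binary argument, using that $\cH_{(2,m)}=\PRM(2,m,1)^{\perp}$ and identifying this dual by hand rather than through Theorem \ref{thm-prm8}. Outside this edge case the remaining work is bookkeeping: confirming that rescaling the representatives only changes the code by a monomial transformation, and checking the two parameter evaluations from Theorem \ref{thm-PRMcode7}.
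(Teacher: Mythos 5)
Your proof is correct wherever the theorem itself is correct, and it runs the paper's argument in the opposite direction. The paper identifies the Hamming half first: by Theorem \ref{thm-PRMcode7}, $\PRM(q,m,(m-1)(q-1)-1)$ has parameters $[(q^m-1)/(q-1),\,(q^m-1)/(q-1)-m,\,3]$, so by the uniqueness result \cite[Theorem 1.8.2]{HP03} it is monomially equivalent to $\cH_{(q,m)}$; the Simplex half is then deduced by dualizing via Theorem \ref{thm-prm8}. You do the reverse: you identify $\PRM(q,m,1)$ with $\cS_{(q,m)}$ directly (the generator matrix of $\PRM(q,m,1)$ in the basis $x_1,\dots,x_m$ has the chosen projective-point representatives as its columns, which is exactly the defining matrix of a Simplex code, so the two codes agree up to column scaling, i.e.\ up to monomial equivalence), and then dualize via Theorem \ref{thm-prm8} at $h=1$ to get the Hamming half; your reserve argument for that half is precisely the paper's proof. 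The trade-off is that your Simplex identification is elementary and bypasses the duality theorem entirely, whereas the paper needs Theorem \ref{thm-prm8} even for the Simplex half; both proofs need either Theorem \ref{thm-prm8} or the parameter-uniqueness argument for the Hamming half.

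Your caveat about $q=2$ is not a defect of your proof but of the paper's. The hypothesis $h\not\equiv 0\pmod{q-1}$ of Theorem \ref{thm-prm8} fails identically when $q=2$, so the duality step is unavailable to either proof; worse, the paper's parameter computation also breaks there: for $q=2$ one has $h=(m-1)(q-1)-1=m-2$, Theorem \ref{thm-PRMcode7} gives $u=m-3$, $v=0$, hence minimum distance $4$, and its dimension formula gives $\sum_{t=1}^{m-2}\binom{m}{t}=2^m-m-2$, one less than the Hamming dimension $2^m-1-m$. So $\PRM(2,m,m-2)$ is a proper (even-weight) subcode of the binary Hamming code, and the Hamming half of the statement is simply false at $q=2$; your observation that for $q=2$, $m=3$ it degenerates to the $[7,3,4]$ Simplex code already witnesses this. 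Your proposed repair --- prove the Simplex half for all $q$ by the generator-matrix argument (note that the paper's own route to the Simplex half also requires $q\geq 3$) and restrict the Hamming half to $q\geq 3$ --- is the correct one.
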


\begin{proof} 
It is  well known that any $[(q^m-1)/(q-1), (q^m-1)/(q-1)-m, 3]$ linear code over $\gf(q)$ is monomially equivalent to the Hamming code $\cH_{(q,m)}$ \cite[Theorem 1.8.2]{HP03}.  
By  Theorem \ref{thm-PRMcode7},   $\mathrm{PRM}(q,m,(m-1)(q-1)-1)$ has the same parameters as 
a Hamming code $\cH_{(q,m)}$ and thus they are monomially-equivalent.  Note all projective RM codes 
are monomially-equivalent and all Hamming codes are monomially-equivalent. The desired second conclusion then 
follows.   
The desired first conclusion 
then follows from Theorem \ref{thm-prm8}. 
\end{proof}

The weight distribution of  $\cH_{(q,m)}$  is given in the following lemma 
\cite{DL17}. 

\begin{lemma}\label{lem-HCwt}  \cite[p. 2418]{DL17} 
The weight distribution of $\cH_{(q,m)}$ is given by
\begin{eqnarray*}
q^m A_{k}(\cH_{(q,m)})= 
& \sum_{\substack {0 \le i \le \frac{q^{m-1}-1}{q-1} \\0 \le j \le q^{m-1} \\ i+j=k}}\left[\binom{\frac{q^{m-1}-1}{q-1}}{i}
 \binom{q^{m-1}}{j}\Big((q-1)^k+(-1)^j(q-1)^i(q^m-1)\Big)\right]
\end{eqnarray*}
for $0 \leq k \leq (q^m-1)/(q-1)$.
\end{lemma}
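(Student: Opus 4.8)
The plan is to derive this weight distribution from the MacWilliams identity, exploiting the fact --- already recorded in this section --- that $\cH_{(q,m)}$ is the dual of the Simplex code $\cS_{(q,m)}$ and that $\cS_{(q,m)}$ is a constant-weight code. Indeed, every nonzero codeword of $\cS_{(q,m)}$ has weight exactly $q^{m-1}$, so its weight enumerator is the simple polynomial $W_{\cS}(x,y) = x^n + (q^m-1)\, x^{\,n-q^{m-1}} y^{q^{m-1}}$, where $n = (q^m-1)/(q-1)$ and $|\cS_{(q,m)}| = q^m$. Since the weight distribution of a one-weight code is trivial to write down, all the work is pushed onto the dual transform.

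First I would apply the MacWilliams identity $W_{\cH}(x,y) = q^{-m}\, W_{\cS}(x+(q-1)y,\, x-y)$ and specialise to $x=1$, which turns the generating function of $\big(A_k(\cH_{(q,m)})\big)$ into
\begin{equation*}
\sum_{k} A_k(\cH_{(q,m)})\, y^k = \frac{1}{q^m}\left[ (1+(q-1)y)^n + (q^m-1)\,(1+(q-1)y)^{N}(1-y)^{q^{m-1}} \right],
\end{equation*}
where I have set $N := n - q^{m-1} = (q^{m-1}-1)/(q-1)$; this last identity is the key arithmetic observation, as it simultaneously pins down the exponent appearing in the second term and explains why $(q^{m-1}-1)/(q-1)$ is the range of the summation index $i$ in the claimed formula.

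The decisive algebraic step is to expand the two summands so that their $y^k$-coefficients share a common shape. For the second summand this is immediate from the product of $(1+(q-1)y)^N = \sum_i \binom{N}{i}(q-1)^i y^i$ and $(1-y)^{q^{m-1}} = \sum_j \binom{q^{m-1}}{j}(-1)^j y^j$, giving $\sum_{i+j=k}\binom{N}{i}\binom{q^{m-1}}{j}(q-1)^i(-1)^j$. For the first summand I would \emph{not} expand $(1+(q-1)y)^n$ directly; instead I factor it as $(1+(q-1)y)^N (1+(q-1)y)^{q^{m-1}}$ using $n=N+q^{m-1}$, so that its $y^k$-coefficient becomes $\sum_{i+j=k}\binom{N}{i}\binom{q^{m-1}}{j}(q-1)^{i+j} = (q-1)^k\sum_{i+j=k}\binom{N}{i}\binom{q^{m-1}}{j}$. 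Extracting $[y^k]$ from both sides and fusing the two convolutions under a single sum over $i+j=k$ then yields
\begin{equation*}
q^m A_k(\cH_{(q,m)}) = \sum_{i+j=k}\binom{N}{i}\binom{q^{m-1}}{j}\Big[(q-1)^k + (-1)^j (q-1)^i (q^m-1)\Big],
\end{equation*}
which is exactly the asserted identity once $N$ is written back as $(q^{m-1}-1)/(q-1)$.

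The bookkeeping is the only place where attention is needed rather than genuine difficulty: the range restrictions $0 \le i \le (q^{m-1}-1)/(q-1)$ and $0 \le j \le q^{m-1}$ in the statement are not extra hypotheses but are enforced automatically, since $\binom{N}{i}$ and $\binom{q^{m-1}}{j}$ vanish outside those ranges. I expect the main (modest) obstacle to be resisting the temptation to expand the first summand as $\binom{n}{k}(q-1)^k$; writing it instead as a Vandermonde-type convolution against the same $\binom{N}{i}\binom{q^{m-1}}{j}$ skeleton (the two agree because $\sum_{i+j=k}\binom{N}{i}\binom{q^{m-1}}{j}=\binom{n}{k}$) is precisely what makes the two contributions merge into the stated closed form.
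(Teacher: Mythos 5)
Your derivation is correct: applying the MacWilliams identity to the one-weight enumerator of $\cS_{(q,m)}$, specialising at $x=1$, and factoring $(1+(q-1)y)^n = (1+(q-1)y)^{N}(1+(q-1)y)^{q^{m-1}}$ with $N=(q^{m-1}-1)/(q-1)$ merges both terms into the stated convolution, and the boundary bookkeeping (vanishing binomials outside the ranges, $A_0=1$ as a sanity check) all goes through. The paper offers no proof of this lemma --- it is quoted directly from \cite[p. 2418]{DL17} --- but your route is the standard one behind that reference and is the same MacWilliams-based technique the paper itself uses to derive the lifted Hamming enumerator in Theorem \ref{thm-liftedHammingWD}, so there is nothing to flag.
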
 

As a corollary of Theorems \ref{thm-designresult191} and  \ref{thm-july191}, we have the following. 

\begin{theorem}\label{thm-designresult192}
For each positive integer $i$ with $A_i(\cH_{(q,m)})>0$, the incidence structure 
$\bD_i(\cH_{(q,m)})$ is a $2$-$(n,i,\lambda_i)$ design for some $\lambda_i$.  
Furthermore, $\bD_{q^{m-1}}(\cS_{(q,m)})$ is a $2$-$(n,q^{m-1}, (q-1)q^{m-2})$ design. 
\end{theorem}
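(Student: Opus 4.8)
The first claim---that $\bD_i(\cH_{(q,m)})$ is a $2$-design whenever $A_i(\cH_{(q,m)}) > 0$---follows almost immediately from the machinery already set up. By Theorem~\ref{thm-july191}, the Hamming code $\cH_{(q,m)}$ is the projective Reed-Muller code $\PRM(q,m,(m-1)(q-1)-1)$. Theorem~\ref{thm-designresult191} then asserts directly that every nonempty weight class of a projective Reed-Muller code supports a $2$-design. So the first sentence is just a specialization of Theorem~\ref{thm-designresult191} via the identification in Theorem~\ref{thm-july191}, and I would state it in one line, exactly as the phrase ``as a corollary'' preceding the statement suggests.

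The second claim is the quantitative one: $\bD_{q^{m-1}}(\cS_{(q,m)})$ is a $2$-$(n, q^{m-1}, (q-1)q^{m-2})$ design, where $n = (q^m-1)/(q-1)$. The plan here is to first establish that it \emph{is} a $2$-design (again by the projective Reed-Muller identification---$\cS_{(q,m)}$ is $\PRM(q,m,1)$ by Theorem~\ref{thm-july191}, so Theorem~\ref{thm-designresult191} applies, giving the $2$-design property for the unique nonzero weight $q^{m-1}$), and then to pin down the parameters $b$, $k$, and $\lambda$ by counting. The key input is that the Simplex code has weight enumerator $1 + (q^m-1)z^{q^{m-1}}$, so every nonzero codeword has weight exactly $q^{m-1}$. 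The block size is therefore $k = q^{m-1}$. For the number of blocks $b$, the $q^m - 1$ nonzero codewords fall into scalar classes of size $q-1$ (nonzero scalars fix the support), so the number of \emph{distinct} supports is $b = (q^m-1)/(q-1) = n$; I would note that distinct scalar classes give distinct supports here because the code is projective / the supports of a one-weight code in this family are genuinely distinct, which must be checked but is routine.

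With $v = n = (q^m-1)/(q-1)$, $k = q^{m-1}$, and $b = n$ in hand, I would compute $\lambda$ from the standard design identity $\lambda \binom{v}{2} = b \binom{k}{2}$, i.e.
\[
\lambda = \frac{b\,k(k-1)}{v(v-1)} = \frac{n \cdot q^{m-1}(q^{m-1}-1)}{n(n-1)} = \frac{q^{m-1}(q^{m-1}-1)}{(q^m-1)/(q-1) - 1}.
\]
Simplifying the denominator, $(q^m-1)/(q-1) - 1 = (q^m - q)/(q-1) = q(q^{m-1}-1)/(q-1)$, so $\lambda = q^{m-1}(q^{m-1}-1)(q-1) / \bigl(q(q^{m-1}-1)\bigr) = (q-1)q^{m-2}$, matching the claimed value. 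The main obstacle, and the only genuinely non-formal step, is verifying that the $n$ scalar classes of nonzero codewords yield $n$ \emph{distinct} supports so that $b = n$ and no blocks are repeated; once $b$ is confirmed the value of $\lambda$ is forced by the counting identity. I would handle this by observing that if two codewords had the same support of size $q^{m-1}$, then within each coordinate class their ratio would be a constant of $\gf(q)^*$, forcing them to be scalar multiples, as in the argument already used in the proof of Theorem~\ref{thm-fundam2}.
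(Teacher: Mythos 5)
Your proposal is correct and follows essentially the same route as the paper: Theorem \ref{thm-designresult192} is presented there as an immediate corollary of Theorems \ref{thm-designresult191} and \ref{thm-july191}, which is exactly the specialization you describe for both the Hamming and the Simplex claims. The only difference is that you derive $\lambda=(q-1)q^{m-2}$ yourself (counting $b=n$ distinct supports as scalar classes of the one-weight codewords, then applying $\lambda\binom{v}{2}=b\binom{k}{2}$), whereas the paper takes this value as known, citing the Assmus--Mattson-based proof in \cite[Theorem 10.23]{DTbook22}; your computation, including the check that equal supports force scalar multiples, is correct and makes the statement self-contained.
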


Theorem \ref{thm-designresult192} documents all the support designs of the Hamming codes, which was 
obtained by investigating the automorphism group of the Hamming codes.  
With the help of the Assmus-Mattson theorem it was proved in \cite[Theorem 10.23]{DTbook22} that $\bD_{q^{m-1}}(\cS_{(q,m)})$ is a $2$-$(n, q^{m-1}, (q-1)q^{m-2})$ design and the following hold:
\begin{itemize}
\item $\bD_{3}(\cH_{(q,m)})$ is a $2$-$(n, 3, q-1)$ design for each prime power $q \geq 2$.  
 \item $\bD_{4}(\cH_{(3,m)})$ is a $2$-$((3^m-1)/2, 4, \lambda_m)$ design for some $\lambda_m$. 
\end{itemize}
These are the known $2$-designs supported by the Hamming and Simplex codes obtained via the Assmus-Mattson theorem.  Hence, the automorphism group approach is much more powerful than the Assmus-Matsson theorem approach when they are used for studying designs supported by the Hamming codes.

\subsection{The support designs of lifted Simplex and Hamming codes} 

Let $\ell$ be a positive integer with $\ell \leq m$ and let $\cS_{(q,m)}(q|q^\ell)$ denote the lifted code over $\gf(q^\ell)$ of 
the Simplex code $\cS_{(q,m)}$.  

\begin{theorem}\label{thm-paramSimplex}
Let $m \geq 2$. Then the following hold. 
\begin{itemize}
\item The lifted code $\cS_{(q,m)}(q|q^\ell)$ has  parameters $[n, m, q^{m-1}]_{q^\ell}$.  
\item The dual code $\cS_{(q,m)}(q|q^\ell)^{\perp}$ has parameters $[n, n-m, 3]_{q^\ell}$. 
\end{itemize} 
\end{theorem}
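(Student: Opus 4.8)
The plan is to apply Theorem~\ref{thm-fundam2} directly, since the Simplex code $\cS_{(q,m)}$ is an $[n, m, q^{m-1}]_q$ linear code and lifting preserves length, dimension, and minimum distance. For the first bullet, Result~2 of Theorem~\ref{thm-fundam2} immediately gives that $\cS_{(q,m)}(q|q^\ell)$ has parameters $[n, m, q^{m-1}]_{q^\ell}$: the length and dimension are inherited because a generator matrix $G$ of $\cS_{(q,m)}$ retains rank $m$ over $\gf(q^\ell)$, and the minimum distance equals $q^{m-1}$ by the minimum-distance statement in Result~2 (equivalently by \cite[Theorem~7]{DT21}). So the first bullet requires essentially no new work beyond citing the fundamental theorem.

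For the second bullet, I would first establish the length and dimension of the dual. By Theorem~\ref{thm-fundam3}, we have the clean identity
$$
\cS_{(q,m)}(q|q^\ell)^\perp = \cS_{(q,m)}^\perp(q|q^\ell) = \cH_{(q,m)}(q|q^\ell),
$$
using that $\cH_{(q,m)} = \cS_{(q,m)}^\perp$ by definition. The dual therefore has length $n$ and dimension $n-m$, inherited from the fact that $\cH_{(q,m)}$ is an $[n, n-m, 3]_q$ code and lifting preserves dimension (Result~2 of Theorem~\ref{thm-fundam2}). The only remaining claim is that the minimum distance of the lifted Hamming code is still $3$, and again this follows directly: since $\cH_{(q,m)}$ has minimum distance $3$, Result~2 gives $d(\cH_{(q,m)}(q|q^\ell)) = 3$ as well.

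The step that deserves the most care — and which I expect to be the only genuine content beyond bookkeeping — is making sure the minimum distance of the dual really does stay at $3$ rather than potentially dropping. Here the subtlety is conceptual: lifting always \emph{preserves} the minimum distance (it cannot decrease, since any nonzero lifted codeword $\bc = \alpha_1\bc_1 + \cdots + \alpha_\ell \bc_\ell$ has weight at least $\max_i \wt(\bc_i) \geq d$ by inequality~\eqref{eq:1}, and it is achieved by the scalar multiples $u\bc$ of a minimum-weight word of the base code), so once one invokes Result~2 correctly for the Hamming code there is nothing left to prove. The main thing to verify is that the hypothesis $\ell \leq m$ (or just $\ell$ being any positive integer) poses no obstruction: Theorem~\ref{thm-fundam2} and Theorem~\ref{thm-fundam3} hold for all positive integers $\ell$, so the restriction in the section header is not needed for this parameter statement and the argument goes through verbatim. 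The entire proof is thus a two-line deduction from Theorems~\ref{thm-fundam2} and \ref{thm-fundam3}.
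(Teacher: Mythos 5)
Your proof is correct and follows essentially the same route as the paper: the first bullet is Theorem~\ref{thm-fundam2} applied to $\cS_{(q,m)}$, and the second is Theorem~\ref{thm-fundam3} (giving $\cS_{(q,m)}(q|q^\ell)^{\perp}=\cH_{(q,m)}(q|q^\ell)$) combined with Theorem~\ref{thm-fundam2} applied to the Hamming code. Your extra remarks (that the minimum distance cannot drop and that the restriction $\ell\leq m$ is unnecessary here) are accurate but add nothing beyond what the paper's two-line argument already contains.
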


\begin{proof}
It follows from Theorem \ref{thm-fundam2} that $\cS_{(q,m)}(q|q^\ell)$ has the same parameters as the Simplex code $\cS_{(q,m)}$.  
It follows from Theorem \ref{thm-fundam3} that $\cS_{(q,m)}(q|q^\ell)^{\perp}$ has the same parameters as the Hamming code $\cH_{(q,m)}$.  
The desired conclusions then follow from the parameters of the Simplex code and Hamming code.  
\end{proof}

The weight enumerator of a lifted Simplex code $\cS_{(q,m)}(q|q^\ell)$ was determined in \cite{SLH24} 
with the help of Theorem \ref{thm-HKM76}.  Below we present it in a slightly different way and give it a direct 
proof. 

\begin{theorem}\label{thm-liftedSimplexWD} \cite{SLH24}
Let $m \geq 2$ and $1 \leq \ell \leq m$. Then the lifted code $\cS_{(q,m)}(q|q^\ell)$ has weight enumerator 
\begin{eqnarray}
1+ \sum_{r=1}^{\ell} \left( \prod_{j=1}^r \frac{(q^\ell -q^{j-1})(q^m -q^{j-1})}{q^{j-1}(q^j-1)} \right) 
z^{q^{m-r}(q^{r}-1)/(q-1)}.  
\end{eqnarray}
\end{theorem}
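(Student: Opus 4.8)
The plan is to apply Result~3 of Theorem~\ref{thm-fundam2}, which is available because the Simplex code is projective. Fix a generator matrix $G=[\bg_1^T \cdots \bg_n^T]$ of $\cS_{(q,m)}$. Since $\cS_{(q,m)}=\PRM(q,m,1)$ by Theorem~\ref{thm-july191}, I may take the columns $\bg_1,\ldots,\bg_n$ to be a complete set of representatives of the projective points of $\PG(m-1,\gf(q))$, that is, exactly one nonzero vector from each $1$-dimensional subspace of $\gf(q)^m$; monomial equivalence does not alter the weight enumerator, so this choice is harmless. By Theorem~\ref{thm-fundam2}(3), for each weight $w$,
$$
A_w(\cS_{(q,m)}(q|q^\ell))=\left|\left\{B \in \gf(q)^{\ell \times m}:\left|V_B \cap \{\bg_1,\ldots,\bg_n\}\right|=n-w\right\}\right|,
$$
where $V_B=\{\bg \in \gf(q)^m: B\bg^T=\bzero^T\}$ is the right kernel of $B$.

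Next I would convert the intersection count into a rank count. Because each $\bg_j$ represents a distinct $1$-dimensional subspace and $V_B$ is a subspace, $\bg_j \in V_B$ holds if and only if the whole line $\langle \bg_j\rangle$ lies in $V_B$; hence $|V_B \cap \{\bg_1,\ldots,\bg_n\}|$ equals the number of projective points lying in $V_B$, namely $(q^{\dim V_B}-1)/(q-1)$. Writing $r=\rank(B)$ and using $\dim V_B=m-r$ gives
$$
w=\frac{q^m-1}{q-1}-\frac{q^{m-r}-1}{q-1}=\frac{q^{m-r}(q^r-1)}{q-1}.
$$
Thus the weight of the codeword attached to $B$ depends only on $\rank(B)$, and the resulting values are exactly the exponents in the claimed enumerator, with $r=0$ (the zero matrix) giving the zero codeword and the constant term $1$. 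Since $\ell \leq m$, the rank $r$ ranges over $0,1,\ldots,\ell$, matching the summation range.

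It then remains to count, for each $r$, the number $N(r)$ of $\ell \times m$ matrices over $\gf(q)$ of rank $r$. I would use the standard factorization argument: every rank-$r$ matrix $B$ factors as $B=CD$ with $C$ an $\ell \times r$ matrix of full column rank and $D$ an $r \times m$ matrix of full row rank, and the number of such pairs representing a fixed $B$ is $|\GL_r(\gf(q))|$. This yields
$$
N(r)=\frac{\prod_{i=0}^{r-1}(q^\ell-q^i)\,\prod_{i=0}^{r-1}(q^m-q^i)}{\prod_{i=0}^{r-1}(q^r-q^i)}.
$$
The remaining step is purely algebraic: reindexing $i=j-1$ and using $\prod_{i=0}^{r-1}(q^r-q^i)=q^{r(r-1)/2}\prod_{j=1}^r(q^j-1)=\prod_{j=1}^r q^{j-1}(q^j-1)$ rewrites $N(r)$ as $\prod_{j=1}^r \frac{(q^\ell-q^{j-1})(q^m-q^{j-1})}{q^{j-1}(q^j-1)}$, which is precisely the coefficient of $z^{q^{m-r}(q^r-1)/(q-1)}$ in the statement.

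I do not expect a serious obstacle, since once Theorem~\ref{thm-fundam2}(3) is invoked everything reduces to elementary linear algebra over $\gf(q)$. The one point deserving care is the identity $|V_B \cap \{\bg_1,\ldots,\bg_n\}|=(q^{\dim V_B}-1)/(q-1)$, which genuinely uses projectivity (distinct columns spanning distinct lines, so that membership in $V_B$ is equivalent to line-containment). A useful sanity check is that $\sum_{r=0}^{\ell} N(r)=q^{\ell m}$ equals the total number of $\ell \times m$ matrices over $\gf(q)$, equivalently the number of codewords of $\cS_{(q,m)}(q|q^\ell)$, confirming that every codeword is counted exactly once.
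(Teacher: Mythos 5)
Your proposal is correct and takes essentially the same approach as the paper's own proof: both reduce the problem, via Theorem~\ref{thm-fundam2}(3) (equivalently, equations~\eqref{eq:4} and~\eqref{eq:5}), to observing that $\left|V_B \cap \{\bg_1,\ldots,\bg_n\}\right| = (q^{m-r}-1)/(q-1)$ for a matrix $B$ of rank $r$, and then counting the $\ell \times m$ matrices over $\gf(q)$ of each rank $r$ with the standard product formula. The only difference is one of detail: you spell out the line-containment argument and the factorization count of rank-$r$ matrices, which the paper states as "easily verified" and as a known fact, respectively.
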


\begin{proof}
We follow the notation of the proof of Theorem \ref{thm-fundam3}. 
Let $G=[\bg_1^T \ldots \bg_n^T]$ be a generator matrix of the Simplex code $\cS_{(q,m)}$. By definition the set 
$\{ \bg_1, \ldots, \bg_n\}$ is a point set of 
$\PG(m-1, \gf(q))$.  For any $B\in \gf(q)^{\ell \times m}$ with rank $1\leq r\leq \ell$, it is easily verified that $$\left|\{\bg_1, \bg_2,\ldots, \bg_n  \} \cap V_B \right|=\frac{q^{m-r}-1}{q-1}.$$
By (\ref{eq:4}), the nonzero weights of $\C(q| q^\ell)$ are $\{w_r:=q^{m-r}(q^r-1)/(q-1):~1\leq r\leq \ell \}$. Note that $1 \leq r \leq \ell$ and the number of $\ell \times m$ matrices over $\gf(q)$ with rank $r$ is equal to 
$$
 \prod_{j=1}^r \frac{(q^\ell -q^{j-1})(q^m -q^{j-1})}{q^{j-1}(q^j-1)}.
$$
The total number of codewords with weight $w_r$ in the lifted code 
$\cS_{(q, m)}(q|q^\ell)$ directly follows from (\ref{eq:5}). This completes the proof.  
\end{proof}

As a corollary of Theorem \ref{thm-liftedSimplexWD}, we have the following family of projective two-weight codes, which have the same parameters as the two-weight codes in Example RT1 in \cite{CK1986}.

\begin{corollary}\label{cor-2wtcode}
Let $m \geq 2$. Then the lifted code $\cS_{(q,m)}(q|q^2)$ has weight enumerator 
$$
1+(q+1)(q^m-1)z^{q^{m-1}}+(q^m-q)(q^m-1)z^{q^{m-1}+q^{m-2}}. 
$$
\end{corollary}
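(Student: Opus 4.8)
The plan is to obtain the corollary by directly specializing the general weight enumerator of Theorem \ref{thm-liftedSimplexWD} to the case $\ell = 2$. Since $m \geq 2$ guarantees $\ell = 2 \leq m$, the hypotheses of that theorem are satisfied, so the weight enumerator of $\cS_{(q,m)}(q|q^2)$ consists of the constant term $1$ together with exactly the two summands indexed by $r = 1$ and $r = 2$. The entire task thus reduces to evaluating the exponent $q^{m-r}(q^r-1)/(q-1)$ and the coefficient $\prod_{j=1}^r (q^2 - q^{j-1})(q^m - q^{j-1}) / (q^{j-1}(q^j-1))$ for these two values of $r$.

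First I would handle $r = 1$. The exponent is $q^{m-1}(q-1)/(q-1) = q^{m-1}$, and the single-factor product is $(q^2-1)(q^m-1)/(q-1)$; since $(q^2-1)/(q-1) = q+1$, this coefficient collapses to $(q+1)(q^m-1)$, which matches the middle term of the claimed enumerator.

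Next I would treat $r = 2$. The exponent is $q^{m-2}(q^2-1)/(q-1) = q^{m-2}(q+1) = q^{m-1} + q^{m-2}$. For the coefficient I would multiply the $j=1$ factor $(q+1)(q^m-1)$ by the $j=2$ factor $(q^2 - q)(q^m - q)/(q(q^2-1))$. The main bookkeeping step is simplifying this second factor: writing $q^2 - q = q(q-1)$ and $q^m - q = q(q^{m-1}-1)$, and cancelling against the denominator $q(q-1)(q+1)$, reduces it to $q(q^{m-1}-1)/(q+1)$. Multiplying by the $j=1$ factor then cancels the $q+1$ and leaves $(q^m-1)\cdot q(q^{m-1}-1) = (q^m-1)(q^m-q)$, the claimed coefficient of $z^{q^{m-1}+q^{m-2}}$.

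There is no conceptual obstacle here; the only thing to watch is the algebraic cancellation in the $r=2$ product, where the $(q+1)$ contributed by the first factor must be tracked carefully against the $(q^2-1)$ in the denominator of the second factor so that no stray power of $q$ survives. Collecting the constant term with the two simplified summands yields exactly the stated weight enumerator, confirming that $\cS_{(q,m)}(q|q^2)$ is a projective two-weight code with nonzero weights $q^{m-1}$ and $q^{m-1}+q^{m-2}$.
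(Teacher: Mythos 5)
Your proposal is correct and follows exactly the paper's route: the paper presents this result as an immediate corollary of Theorem \ref{thm-liftedSimplexWD}, i.e., by specializing the general weight enumerator to $\ell=2$, which is precisely what you do. Your algebraic simplifications of the $r=1$ and $r=2$ terms (yielding the coefficients $(q+1)(q^m-1)$ and $(q^m-q)(q^m-1)$ and the exponents $q^{m-1}$ and $q^{m-1}+q^{m-2}$) are accurate, so nothing further is needed.
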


As a corollary of Theorem \ref{thm-liftedSimplexWD}, we have the following family of projective three-weight codes.

\begin{corollary}\label{cor-3wtcode}
Let $m \geq 3$. Then the lifted code $\cS_{(q,m)}(q|q^3)$ has weight enumerator 
\begin{eqnarray*}
1+(q^2+q+1)(q^m-1)z^{q^{m-1}}+(q^2+q+1)(q^m-1)(q^m-q)z^{q^{m-1}+q^{m-2}} + \\
(q^m-1)(q^m-q)(q^m-q^2)z^{q^{m-1}+q^{m-2}+q^{m-3}}. 
\end{eqnarray*}
\end{corollary}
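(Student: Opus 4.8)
The plan is to obtain Corollary \ref{cor-3wtcode} directly from Theorem \ref{thm-liftedSimplexWD} by specializing $\ell = 3$ and simplifying the resulting three nonzero terms. First I would record the three exponents. For $r \in \{1,2,3\}$ the exponent is $q^{m-r}(q^r-1)/(q-1)$, and since $(q^r-1)/(q-1)$ equals $1$, $q+1$, $q^2+q+1$ respectively, these simplify to $q^{m-1}$, $q^{m-1}+q^{m-2}$, and $q^{m-1}+q^{m-2}+q^{m-3}$. These are exactly the three exponents in the stated enumerator, so only the coefficients remain to be verified.

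Next I would compute the coefficient $\prod_{j=1}^{r} \frac{(q^3 - q^{j-1})(q^m - q^{j-1})}{q^{j-1}(q^j - 1)}$ for each $r$. For $r = 1$ the product is the single factor $\frac{(q^3-1)(q^m-1)}{q-1} = (q^2+q+1)(q^m-1)$, using $(q^3-1)/(q-1) = q^2+q+1$. For $r = 2$ the additional factor (from $j=2$) is $\frac{(q^3-q)(q^m-q)}{q(q^2-1)}$, and the key observation is that $\frac{q^3-q}{q(q^2-1)} = \frac{q(q^2-1)}{q(q^2-1)} = 1$, so this factor collapses to $q^m - q$, giving coefficient $(q^2+q+1)(q^m-1)(q^m-q)$. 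For $r = 3$ the additional factor (from $j=3$) is $\frac{(q^3-q^2)(q^m-q^2)}{q^2(q^3-1)}$, and the decisive cancellation is $\frac{q^3-q^2}{q^2(q^3-1)} = \frac{q-1}{q^3-1} = \frac{1}{q^2+q+1}$; this factor therefore equals $\frac{q^m-q^2}{q^2+q+1}$ and cancels the leading $(q^2+q+1)$, leaving coefficient $(q^m-1)(q^m-q)(q^m-q^2)$. These three coefficients match the statement exactly.

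The argument is entirely a bookkeeping simplification of the product formula, so there is no genuine obstacle; the only thing to watch is the pair of cancellations in the $r=2$ and $r=3$ factors, namely $(q^3-q)/(q(q^2-1))=1$ and $(q^3-q^2)/(q^2(q^3-1)) = 1/(q^2+q+1)$, which is precisely where the asymmetry between the $q^3$-part and the $q^m$-part of the product manifests. I would simply substitute $\ell=3$, carry out these two simplifications, and display the three resulting terms. I would also note that the hypothesis $m \geq 3$ is exactly what places $\ell = 3$ within the admissible range $1 \leq \ell \leq m$ of Theorem \ref{thm-liftedSimplexWD}, ensuring the rank-$3$ term (with exponent involving $q^{m-3}$) is meaningful.
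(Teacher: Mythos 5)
Your proposal is correct and is exactly the paper's (implicit) argument: the corollary is stated as an immediate specialization of Theorem \ref{thm-liftedSimplexWD} to $\ell=3$, and your cancellations $(q^3-q)/(q(q^2-1))=1$ and $(q^3-q^2)/(q^2(q^3-1))=1/(q^2+q+1)$ correctly yield the three stated coefficients and exponents. The remark that $m\geq 3$ is needed precisely so that $\ell=3$ lies in the admissible range $1\leq\ell\leq m$ is also right.
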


\begin{theorem}\label{thm-liftedSimplexCodeDesigns2}
Let $m \geq 2$ be an integer and $1  \leq r \leq \ell \leq m$.  Then incidence structure 
$$\bD_{q^{m-r}(q^r-1)/(q-1)}(\cS_{(q,m)}(q|q^\ell) )$$ 
is a 
$2$-$(n, q^{m-r}(q^r-1)/(q-1), \lambda_r)$ design for some positive integer $\lambda_r$.   
\end{theorem}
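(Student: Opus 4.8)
The plan is to deduce this directly from the fundamental structural result, Theorem~\ref{thm-main}, by exhibiting a suitable subgroup of the monomial automorphism group of the Simplex code that is $2$-transitive. First I would recall that by Theorem~\ref{thm-july191} the Simplex code $\cS_{(q,m)}$ is monomially equivalent to the projective Reed-Muller code $\PRM(q,m,1)$, so its coordinates may be indexed by the point set $\{\mathbf{x}^1,\dots,\mathbf{x}^n\}$ of $\PG(m-1,\gf(q))$. By Lemma~\ref{lem-berger}, applied with $h=1$, we have $\PGL_m(\gf(q)) \subseteq \pi\MAut(\PRM(q,m,1))$, which means that for every element $P$ in the permutation action of $\PGL_m(\gf(q))$ on the projective points there is a diagonal matrix $D$ with $DP \in \MAut(\cS_{(q,m)})$. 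Collecting these monomial maps yields a subgroup $E \subseteq \MAut(\cS_{(q,m)})$ whose permutation part realizes the full action of $\PGL_m(\gf(q))$ on the $n$ coordinates.

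Next I would invoke the classical fact, cited earlier in the excerpt and used in the proof of Theorem~\ref{thm-designresult191}, that $\PGL_m(\gf(q))$ acts $2$-transitively on the points of $\PG(m-1,\gf(q))$. Since the permutation part of $E$ is exactly this $2$-transitive action, the group $E$ is a $2$-transitive subgroup of $\MAut(\cS_{(q,m)})$ in the sense required by Theorem~\ref{thm-main} (with $t=2$). Now I apply Theorem~\ref{thm-main} to the code $\cS_{(q,m)}$ and the lift to $\gf(q^\ell)$: part~(2) of that theorem gives immediately that $\bD_i(\cS_{(q,m)}(q|q^\ell))$ is a $2$-$(n,i,\lambda_i)$ design for some integer $\lambda_i$ whenever $A_i(\cS_{(q,m)}(q|q^\ell)) \neq 0$.

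It then remains only to identify the relevant weights $i$. By Theorem~\ref{thm-liftedSimplexWD}, the nonzero weights of $\cS_{(q,m)}(q|q^\ell)$ are precisely $w_r = q^{m-r}(q^r-1)/(q-1)$ for $1 \leq r \leq \ell$, each occurring with nonzero frequency. Specializing $i = w_r$ yields that $\bD_{q^{m-r}(q^r-1)/(q-1)}(\cS_{(q,m)}(q|q^\ell))$ is a $2$-design for each $r$ with $1 \leq r \leq \ell$, which is exactly the claimed statement; positivity of $\lambda_r$ follows because $A_{w_r} \neq 0$ forces the block set to be nonempty and every point lies in some block under the transitive action.

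I expect the one genuinely delicate point to be the passage from Lemma~\ref{lem-berger}, which is stated for the projective Reed-Muller code under a specific coordinate indexing, to a $2$-transitive subgroup of $\MAut$ of the Simplex code compatible with the lift. One must check that the monomial equivalence of Theorem~\ref{thm-july191} transports the $2$-transitive monomial group to the Simplex code and that Theorem~\ref{thm-main} is insensitive to this equivalence (it is, since lifting and monomial equivalence commute and $\MAut$ is preserved under monomial equivalence). Everything else is a routine application of the already-established machinery, so the proof should be short, with the bulk of the work residing in correctly citing Lemma~\ref{lem-berger}, Theorem~\ref{thm-main}, and Theorem~\ref{thm-liftedSimplexWD}.
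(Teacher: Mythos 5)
Your proposal is correct and follows essentially the same route as the paper: the paper's proof likewise combines Theorem~\ref{thm-july191} and Lemma~\ref{lem-berger} to conclude that $\pi\MAut(\cS_{(q,m)})$ contains the doubly transitive group $\PGL_m(\gf(q))$, and then cites Theorems~\ref{thm-main} and~\ref{thm-liftedSimplexWD} to get the $2$-design property at each nonzero weight $w_r=q^{m-r}(q^r-1)/(q-1)$. Your additional care about transporting the monomial group through the equivalence of Theorem~\ref{thm-july191} is a detail the paper leaves implicit, but it is the same argument.
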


\begin{proof}
By Lemma \ref{lem-berger} and Theorem \ref{thm-july191},  $\pi\MAut(\cS_{q,m)})$ is doubly transitive. 
The $2$-design property follows from Theorems \ref{thm-main} and \ref{thm-liftedSimplexWD}.   
\end{proof} 

We have the following remarks about the designs in Theorem \ref{thm-liftedSimplexCodeDesigns2}. 
\begin{itemize}
\item It follows from Theorem \ref{thm-fundam2} that 
$$
 \cB_{q^{m-1}}(\cS_{(q,m)}(q|q^\ell) ) =  \cB_{q^{m-1}}(\cS_{(q,m)} ). 
$$
Thus, the design $\bD_{q^{m-1}}(\cS_{(q,m)}(q|q^\ell) )$ is the same as the design $\bD_{q^{m-1}}(\cS_{(q,m)})$ and is not new.  
\item The designs $\bD_{q^{m-r}(q^r-1)/(q-1)}(\cS_{(q,m)}(q|q^\ell) )$ for $2 \leq r \leq \ell$ look new.  
However, the size of the block set $\cB_{q^{m-r}(q^r-1)/(q-1)}(\cS_{(q,m)}(q|q^\ell) )$ for $2 \leq r \leq \ell$ and thus the corresponding $\lambda_r$ are hard to determine.  
\end{itemize} 

\begin{example} 
Consider the case $(q, m, \ell)=(2,4,2)$. 
Then the lifted code $\cS_{(2,4)}(2|2^2)$ has parameters $[15, 4, 8]_{4}$ and weight enumerator 
$1+45z^8+210z^{12}.$ Furthermore,  the following hold:
\begin{itemize}
\item $\bD_{8}(\cS_{(2,4)}(2|2^2))$ is a $2$-$(15, 8, 4)$ design.
\item $\bD_{12}(\cS_{(2,4)}(2|2^2))$ is a $2$-$(15, 12, 22)$ design.  
\end{itemize}
\end{example} 

\begin{example}\label{exam-simplex} 
Consider the case $(q, m,\ell)=(3, 4,3)$. Then the lifted code $\cS_{(3,4)}(3|3^3)$ has parameters $[40, 4, 27]_{27}$ and weight enumerator 
$1+1040z^{27}+81120z^{36}+449280z^{39}.$ Furthermore,  the following hold:
\begin{itemize}
\item $\bD_{27}(\cS_{(3,4)}(3|3^3))$ is a $2$-$(40, 27, 18)$ design.
\item $\bD_{36}(\cS_{(3,4)}(3|3^3))$ is a $2$-$(40, 36, 105)$ design. 
\item $\bD_{39}(\cS_{(3,4)}(3|3^3))$ is a $2$-$(40, 39, 38)$ design (i.e.,  a complete design). 
\end{itemize}
\end{example}

\begin{open}
Determine the values $\lambda_r$ for the $2$-$(n, q^{m-r}(q^r-1)/(q-1), \lambda_r)$  designs for $r \geq 2$  
in Theorem \ref{thm-liftedSimplexCodeDesigns2}. 
\end{open}

\begin{theorem}\label{thm-liftedHammingWD}
Let $m \geq 2$ and $1 \leq \ell \leq m$. Then the lifted code $\cH_{(q,m)}(q|q^\ell)$ has parameters 
$[n, n-m, 3]_{q^\ell}$ and weight enumerator 
$$
q^{-\ell m} (1+(q^\ell-1)z )^n \ A \left(   \frac{1-z}{1+(q^\ell -1)z}\right), 
$$
where 
\begin{eqnarray*}
A(z)=1+ \sum_{r=1}^{\ell} \left( \prod_{j=1}^r \frac{(q^\ell -q^{j-1})(q^m -q^{j-1})}{q^{j-1}(q^j-1)} \right) z^{q^{m-r}(q^r-1)/(q-1)}.  
\end{eqnarray*}
\end{theorem}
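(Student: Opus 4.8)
The plan is to recognize the lifted Hamming code as the dual of the lifted Simplex code and then invoke the MacWilliams identity, so that the weight enumerator $A(z)$ already computed in Theorem \ref{thm-liftedSimplexWD} transforms directly into the asserted expression.

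First I would record the duality. Since the Hamming code is by definition the dual of the Simplex code, $\cH_{(q,m)} = \cS_{(q,m)}^\perp$, Theorem \ref{thm-fundam3} gives
$$
\cH_{(q,m)}(q|q^\ell) = \cS_{(q,m)}^\perp(q|q^\ell) = \cS_{(q,m)}(q|q^\ell)^\perp.
$$
The parameters $[n, n-m, 3]_{q^\ell}$ are then exactly those recorded for $\cS_{(q,m)}(q|q^\ell)^\perp$ in Theorem \ref{thm-paramSimplex}, so nothing further is needed for the parameter claim.

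Next I would apply the MacWilliams transform over $\gf(q^\ell)$. The lifted Simplex code $\cS_{(q,m)}(q|q^\ell)$ has dimension $m$ over $\gf(q^\ell)$, hence cardinality $q^{\ell m}$, and its single-variable weight enumerator is the polynomial $A(z)$ of Theorem \ref{thm-liftedSimplexWD}. The MacWilliams identity for a code $\C$ of length $n$ over a field of order $Q$ states that the dual weight enumerator equals
$$
\frac{1}{|\C|}\,(1+(Q-1)z)^n\, W_\C\!\left(\frac{1-z}{1+(Q-1)z}\right),
$$
as one sees by setting $x=1,\ y=z$ in the homogeneous form $W_{\C^\perp}(x,y)=\tfrac{1}{|\C|}W_\C(x+(Q-1)y,\,x-y)$ and factoring out $(1+(Q-1)z)^n$. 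Substituting $Q = q^\ell$, $|\C| = q^{\ell m}$, and $W_\C = A$ yields precisely
$$
q^{-\ell m}(1+(q^\ell-1)z)^n\, A\!\left(\frac{1-z}{1+(q^\ell-1)z}\right),
$$
which is the claimed weight enumerator.

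The argument is essentially a bookkeeping exercise built on two already-established facts (the commutation of lifting with duality in Theorem \ref{thm-fundam3}, and the Simplex weight enumerator in Theorem \ref{thm-liftedSimplexWD}), so I do not expect a substantive obstacle. The only points demanding care are confirming that the relevant field order in the MacWilliams transform is $q^\ell$ rather than $q$, and that the code size is $q^{\ell m}$ rather than $q^m$; both follow from the fact that the lifting is performed over $\gf(q^\ell)$ while the dimension of the Simplex code remains $m$.
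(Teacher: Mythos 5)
Your proposal is correct and follows essentially the same route as the paper: the paper's own proof is a one-line appeal to Theorems \ref{thm-fundam3}, \ref{thm-liftedSimplexWD}, \ref{thm-fundam2} and the MacWilliams Identity, which is exactly the duality-plus-MacWilliams argument you spell out. Your version merely makes explicit the bookkeeping (field order $q^\ell$, code size $q^{\ell m}$, dehomogenization of the MacWilliams transform) that the paper leaves implicit.
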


\begin{proof}
The desired conclusions follow from Theorems \ref{thm-fundam3}, \ref{thm-liftedSimplexWD}, \ref{thm-fundam2} and the MacWilliams Identity. 
\end{proof}

\begin{theorem}\label{thm-liftedHammingCodeDesigns}
Let $m \geq 2$ be an integer and $1 \leq \ell \leq m$.  
If $A_i(\cH_{(q,m)}(q|q^\ell)) >0$ for some $3 \leq i \leq n$,  then the incidence structure 
$\bD_{i}(\cH_{(q,m)}(q|q^\ell) )$ 
is a 
$2$-$(n, i, \lambda_i)$ design for some positive integer $\lambda_i$.   
\end{theorem}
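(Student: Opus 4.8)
The plan is to obtain Theorem \ref{thm-liftedHammingCodeDesigns} as a direct application of Theorem \ref{thm-main} with $t=2$, following the same template as the proof of Theorem \ref{thm-liftedSimplexCodeDesigns2}. Concretely, I would set $\C = \cH_{(q,m)}$ in Theorem \ref{thm-main}, so that $\C(q|q^\ell) = \cH_{(q,m)}(q|q^\ell)$, and then the whole argument reduces to producing a subgroup $E \subseteq \MAut(\cH_{(q,m)})$ whose permutation parts act $2$-transitively on the $n$ coordinates. Note that the weight-enumerator formula of Theorem \ref{thm-liftedHammingWD} is not needed for the design property; only the hypothesis $A_i > 0$ enters.

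The first and key step is to show that $\pi\MAut(\cH_{(q,m)})$ is doubly transitive. By Theorem \ref{thm-july191}, the Hamming code $\cH_{(q,m)}$ is monomially equivalent to the projective Reed-Muller code $\PRM(q,m,(m-1)(q-1)-1)$, whose coordinates are indexed by the points of $\PG(m-1,\gf(q))$. Lemma \ref{lem-berger} then gives $\PGL_m(\gf(q)) \subseteq \pi\MAut(\PRM(q,m,(m-1)(q-1)-1))$, and since $\PGL_m(\gf(q))$ acts $2$-transitively on the point set of $\PG(m-1,\gf(q))$, the group $\pi\MAut$ of this PRM code is doubly transitive. Because the map sending a monomial automorphism to its permutation part is a homomorphism $\MAut(\cdot)\to\Sym_n$, the preimage $E$ of $\PGL_m(\gf(q))$ is a genuine subgroup of the monomial automorphism group whose permutation parts are exactly $\PGL_m(\gf(q))$; in particular $E$ is $2$-transitive in the sense required by Theorems \ref{thm-designCodeAutm} and \ref{thm-main}. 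Since monomial equivalence only conjugates $\pi\MAut$ by a fixed permutation, and conjugation in $\Sym_n$ preserves $2$-transitivity, the same conclusion holds verbatim for $\cH_{(q,m)}$ itself.

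With this $E$ in hand, the second step is immediate: applying Theorem \ref{thm-main} with $t=2$ yields that $\bD_i(\cH_{(q,m)}(q|q^\ell))$ is a $2$-$(n,i,\lambda_i)$ design for every $i\geq 2$ with $A_i(\cH_{(q,m)}(q|q^\ell))\neq 0$. Restricting to the range $3\leq i\leq n$ with $A_i(\cH_{(q,m)}(q|q^\ell))>0$, which is contained in the range $i\geq 2$, gives exactly the asserted statement. The positivity and integrality of $\lambda_i$ come for free: the existence of at least one block, guaranteed by $A_i>0$, forces $\lambda_i = b\binom{i}{2}/\binom{n}{2}$ to be a positive integer, where $b=|\cB_i(\cH_{(q,m)}(q|q^\ell))|\geq 1$.

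I do not expect a serious obstacle here. The one point that deserves care, and the closest thing to a difficulty, is the bookkeeping around the automorphism groups: one must verify that the $2$-transitive action lives inside the \emph{monomial} automorphism group $\MAut(\cH_{(q,m)})$ rather than merely $\GAut(\cH_{(q,m)})$, since Theorem \ref{thm-main} is stated for subgroups of $\MAut$. This is precisely what Lemma \ref{lem-berger} supplies, so the hypothesis of Theorem \ref{thm-main} is met; the remaining details amount to the routine observations that $\pi$ is a homomorphism and that monomial equivalence preserves $2$-transitivity of the permutation parts.
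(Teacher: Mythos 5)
Your proposal is correct and takes essentially the same route as the paper: the paper's proof likewise combines Theorem \ref{thm-july191} with Lemma \ref{lem-berger} to conclude that $\pi\MAut(\cH_{(q,m)})$ is doubly transitive, and then invokes Theorem \ref{thm-main} (with $t=2$) to obtain the $2$-design property. The additional bookkeeping you spell out---that the passage to permutation parts is a homomorphism, so the preimage of $\PGL_m(\gf(q))$ is a genuine subgroup of $\MAut$, and that monomial equivalence preserves $2$-transitivity---is left implicit in the paper's two-line proof.
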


\begin{proof}
By Lemma \ref{lem-berger} and Theorem \ref{thm-july191},  $\pi\MAut(\cH_{q,m)})$ is doubly transitive. 
The $2$-design property follows from Theorems \ref{thm-main}.   
\end{proof} 

We have the following remarks about the designs in Theorem \ref{thm-liftedHammingCodeDesigns}. 
\begin{itemize}
\item Although the weight enumerator of the lifted code $\cH_{(q,m)}(q|q^\ell)$ was explicitly given in 
Theorem \ref{thm-liftedHammingWD},  it is not easy to see if $A_i(\cH_{(q,m)}(q|q^\ell)) >0$ for a specific $i \geq 4$ 
and it is much harder to determine the parameters of a $2$-design $\bD_{i}(\cH_{(q,m)}(q|q^\ell) )$ in 
Theorem \ref{thm-liftedHammingCodeDesigns}.   
\item When  $A_i(\cH_{(q,m)}) >0$,  the designs $\bD_{i}(\cH_{(q,m)})$ and $\bD_{i}(\cH_{(q,m)}(q|q^\ell) )$ may have different values $\lambda_i$ for $i>3$.  When  $A_i(\cH_{(q,m)}) =0$,  it may happen that $A_i(\cH_{(q,m)}(q|q^\ell)) >0$. Hence, certain new $2$-designs are produced in Theorem \ref{thm-liftedHammingCodeDesigns}.  
The following example illustrates these facts. 
\end{itemize}

\begin{example} 
Consider the case $(q, m,\ell)=(2, 4,2)$.  Then the Hamming code $\cH_{(2,4)}$ has parameters $[15, 11,3]_2$ and weight enumerator 
$$ 
1+ 35z^3 + 105z^4 + 168z^5  + 280z^6 +  435z^7 + 435z^8 + 9, 280z^9 + 
168z^{10} + 105z^{11} + 35z^{12} + z^{15} . 
$$ 
In addition, the following hold: 
\begin{itemize}
\item $\bD_{3}(\cH_{(2,4)})$ is a $2$-$(15, 3, 1)$ design.
\item $\bD_{4}(\cH_{(2,4)})$ is a $2$-$(15, 4, 6)$ design.
\item $\bD_{5}(\cH_{(2,4)})$ is a $2$-$(15, 5, 16)$ design.
\item $\bD_{6}(\cH_{(2,4)})$ is a $2$-$(15, 6, 40)$ design.
\item $\bD_{7}(\cH_{(2,4)})$ is a $2$-$(15, 7, 87)$ design.
\item $\bD_{8}(\cH_{(2,4)})$ is a $2$-$(15, 8, 116)$ design. 
\item $\bD_{9}(\cH_{(2,4)})$ is a $2$-$(15, 9, 96)$ design.
\item $\bD_{10}(\cH_{(2,4)})$ is a $2$-$(15, 10, 72)$ design.
\item $\bD_{11}(\cH_{(2,4)})$ is a $2$-$(15, 11, 55)$ design. 
\item $\bD_{12}(\cH_{(2,4)})$ is a $2$-$(15, 12, 22)$ design.
\item $\bD_{15}(\cH_{(2,4)})$ is a $2$-$(15, 15, 1)$ design (a complete design).
\end{itemize}
The lifted code $\cH_{(2,4)}(2|2^2)$ has parameters $[15, 11, 3]_{4}$ and weight enumerator
\begin{eqnarray*}
1 +  105z^3 +  315z^4 + 2394z^5 + 15750z^6 +  54855z^7 + 160695z^8 +  391020z^9  +  \\
 688212z^{10} +  949095z^{11} + 937965z^{12} + 659610z^{13} + 
277830z^{14}  + 56457z^{15}. 
\end{eqnarray*} 
Furthermore,  the following hold:
\begin{itemize}
\item $\bD_{3}(\cH_{(2,4)}(2|2^2))$ is a $2$-$(15, 3, 1)$ design (the same as $\bD_{3}(\cH_{(2,4)})$).  
\item $\bD_{4}(\cH_{(2,4)}(2|2^2))$  is a $2$-$(15, 4, 6)$ design (the same as $\bD_{4}(\cH_{(2,4)})$).  
\item $\bD_{5}(\cH_{(2,4)}(2|2^2))$ is a $2$-$(15, 5, 46)$ design (different from $\bD_{5}(\cH_{(2,4)})$).  
\item $\bD_{6}(\cH_{(2,4)}(2|2^2))$ is a $2$-$(15, 6, 355)$ design (different from $\bD_{6}(\cH_{(2,4)})$).  
\item $\bD_{7}(\cH_{(2,4)}(2|2^2))$ is a $2$-$(15, 7, 1095)$ design (different from $\bD_{7}(\cH_{(2,4)})$).  
\item $\bD_{8}(\cH_{(2,4)}(2|2^2))$ is a $2$-$(15, 8, 1684)$ design (different from $\bD_{8}(\cH_{(2,4)})$).  
\item  $\bD_{i}(\cH_{(2,4)}(2|2^2))$ is a $2$-$(15, i, \binom{15}{i})$ design (a complete design) for each $i$ with $9 \leq i \leq 15$.  
\end{itemize}
This example shows some differences and similarities of the designs supported by the Hamming code 
$\cH_{(2,4)}$ and 
the lifted Hamming code $\cH_{(2,4)}(2|2^2)$.  
\end{example}

\begin{open}
Determine the parameters of the $2$-designs 
in Theorem \ref{thm-liftedHammingCodeDesigns}. 
\end{open} 

\section{Lifted Reed-Muller codes and their support designs} 

The reader is referred to \cite[Chappter 5]{DTbook22} or other books on coding theory for a 
description of the binary Reed-Muller codes $\RM_2(r,m)$ of order $r$.  The parameters of 
these Reed-Muller codes $\RM_2(r,m)$ are known and the weight distribution of $\RM_2(r,m)$ 
is known for certain $r$. It is known that the codes $\RM_2(r,m)$ support $3$-designs. However, 
 the generalised Reed-Muller codes $\RM_q(r,m)$ for $q>3$ support only $2$-designs but do not support $3$-designs \cite[Chapters 5 and 6]{DTbook22}.  The objectives of this section are to show that 
many infinite families of $3$-designs are supported by the lifted codes $\RM_2(r,m)(2|2^\ell)$ and settle the weight distributions of the lifted codes $\RM_2(1,m)(2|2^\ell)$ and $\RM_2(m-2,m)(2|2^\ell)$. 

The following theorem follows from Theorems  \ref{thm-fundam2} and \ref{thm-fundam3} and the basic results about the Reed-Muller codes $\RM_2(r,m)$. 

\begin{theorem}
Let $m \geq 3$ and $0 \leq r <m$. Then the lifted code $\RM_2(r,m)(2|2^\ell)$ has parameters 
$$
\left[ 2^m, \ \sum_{i=0}^r \binom{m}{i},  \ 2^{m-r}     \right]_{2^\ell}.
$$
Furthermore, $\RM_2(r,m)(2|2^\ell)^\perp$ has parameters 
$$
\left[ 2^m, \ 2^m-\sum_{i=0}^r \binom{m}{i},  \ 2^{r+1}     \right]_{2^\ell} 
$$
and 
$$
\RM_2(r,m)(2|2^\ell)^\perp=\RM_2(m-1-r,m)(2|2^\ell). 
$$
\end{theorem}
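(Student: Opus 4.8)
The plan is to reduce everything to the two fundamental results already established, Theorem~\ref{thm-fundam2} and Theorem~\ref{thm-fundam3}, together with the textbook facts about the binary Reed-Muller codes: namely that $\RM_2(r,m)$ is a $[2^m, \sum_{i=0}^r \binom{m}{i}, 2^{m-r}]_2$ code and that $\RM_2(r,m)^\perp = \RM_2(m-1-r,m)$. No new computation beyond a single binomial identity is required.

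First I would establish the parameters of the lifted code itself. Since $\RM_2(r,m)$ is a binary linear code of dimension $\sum_{i=0}^r \binom{m}{i} \geq 1$ and minimum distance $2^{m-r}$, Result~2 of Theorem~\ref{thm-fundam2} (applied with $q=2$) asserts that lifting preserves length, dimension and minimum distance. Hence $\RM_2(r,m)(2|2^\ell)$ is a $[2^m, \sum_{i=0}^r \binom{m}{i}, 2^{m-r}]_{2^\ell}$ code for every positive integer $\ell$, which is the first claim.

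For the dual, the key step is to interchange lifting and dualisation. By Theorem~\ref{thm-fundam3},
$$
\RM_2(r,m)(2|2^\ell)^\perp = \RM_2(r,m)^\perp (2|2^\ell),
$$
and substituting the classical identity $\RM_2(r,m)^\perp = \RM_2(m-1-r,m)$, which is valid since $0 \leq m-1-r < m$, yields
$$
\RM_2(r,m)(2|2^\ell)^\perp = \RM_2(m-1-r,m)(2|2^\ell).
$$
This is exactly the last displayed equation of the theorem. It then remains to read off the parameters of this dual code: its length is $2^m$, and its dimension is $2^m - \sum_{i=0}^r \binom{m}{i}$ because the dimension of a dual code is the length minus the dimension of the original, and lifting preserves dimension. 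Equivalently, applying Theorem~\ref{thm-fundam2} to $\RM_2(m-1-r,m)$ gives dimension $\sum_{i=0}^{m-1-r}\binom{m}{i}$, and the only bookkeeping is the binomial symmetry $\binom{m}{i}=\binom{m}{m-i}$, which gives $\sum_{i=0}^{m-1-r}\binom{m}{i} = \sum_{j=r+1}^{m}\binom{m}{j} = 2^m - \sum_{i=0}^{r}\binom{m}{i}$. Applying Theorem~\ref{thm-fundam2} once more to $\RM_2(m-1-r,m)$ shows that its lift retains the minimum distance $2^{\,m-(m-1-r)} = 2^{r+1}$.

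The argument is essentially mechanical, so there is no deep obstacle; the one point requiring care is ensuring that the Reed-Muller duality identity is invoked over the base field $\gf(2)$ \emph{before} lifting, so that Theorem~\ref{thm-fundam3} transports it to $\gf(2^\ell)$, rather than attempting to prove a duality relation directly over the extension field. The remaining care is simply tracking the binomial symmetry so that the stated dual dimension $2^m - \sum_{i=0}^r \binom{m}{i}$ and minimum distance $2^{r+1}$ match.
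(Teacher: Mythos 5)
Your proposal is correct and follows exactly the route the paper intends: the paper gives no written-out proof, stating only that the theorem ``follows from Theorems \ref{thm-fundam2} and \ref{thm-fundam3} and the basic results about the Reed-Muller codes,'' which is precisely your argument (lifting preserves parameters, lifting commutes with dualisation, classical Reed-Muller duality, and the binomial bookkeeping). You have simply supplied the details the paper leaves implicit, and they all check out.
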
 

The next theorem shows that many infinite families of $3$-designs are supported by the lifted 
codes of the binary Reed-Muller codes $\RM_2(r,m)$ of order $r$ for $1 \leq r<m$.  

\begin{theorem}\label{thm-last}
Let $m \geq 3$ and $\ell \geq 1$.  For each $i \geq 3$ with $A_i(\RM_2(r,m)(2|2^\ell))>0$, the incidence structure 
 $\bD_i(\RM_2(r,m)(2|2^\ell))$ is a $3$-$(2^m, i, \lambda_i)$ design for some positive integer $\lambda_i$. 
\end{theorem}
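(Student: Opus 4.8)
The plan is to exhibit a subgroup $E \subseteq \MAut(\RM_2(r,m))$ that acts $3$-transitively (hence at least $3$-homogeneously) on the $2^m$ coordinate positions, and then to invoke Theorem~\ref{thm-main} with $q=2$ and $t=3$. Since the base field is $\gf(2)$, the only nonzero scalar is $1$ and the only field automorphism is the identity, so $\PAut(\RM_2(r,m)) = \MAut(\RM_2(r,m)) = \GAut(\RM_2(r,m))$; it therefore suffices to produce a $3$-transitive group of coordinate permutations that fixes the code. Once such an $E$ is available, part~(2) of Theorem~\ref{thm-main} immediately gives that $\bD_i(\RM_2(r,m)(2|2^\ell))$ is a $3$-$(2^m, i, \lambda_i)$ design whenever $i \geq 3$ and $A_i(\RM_2(r,m)(2|2^\ell)) > 0$.

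First I would recall the standard indexing of $\RM_2(r,m)$: its $2^m$ coordinates are labelled by the points of the affine space $\AG(m, \gf(2)) = \gf(2)^m$, and the affine general linear group $\AGL(m, \gf(2))$ --- the group of maps $x \mapsto Ax + b$ with $A \in \GL(m, \gf(2))$ and $b \in \gf(2)^m$, acting by permuting these labels --- is a subgroup of $\PAut(\RM_2(r,m))$ for every $0 \le r < m$. This containment is classical and independent of $r$, which is exactly why the conclusion holds uniformly in $r$. I would take $E = \AGL(m, \gf(2))$.

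The key step, and the only place where anything genuinely must be checked, is the $3$-transitivity of $\AGL(m, \gf(2))$ on $\gf(2)^m$. Here I would use the feature special to the prime field of characteristic two: any three pairwise distinct points $p_1, p_2, p_3$ of $\gf(2)^m$ are affinely independent. Indeed, the difference vectors $p_2 - p_1$ and $p_3 - p_1$ are both nonzero and distinct, and over $\gf(2)$ two distinct nonzero vectors are automatically linearly independent (the only available scalar multiple is multiplication by $1$). Given two ordered triples of distinct points, I would translate each first point to $\bzero$, then extend each of the two linearly independent difference pairs to a basis of $\gf(2)^m$ and take the $\GL(m, \gf(2))$-map carrying one basis to the other; composing with the two translations yields an element of $\AGL(m, \gf(2))$ realising the required assignment. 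Hence $E$ is $3$-transitive (for $m=2$ it is in fact the full symmetric group $S_4$).

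Having established $3$-transitivity of $E \subseteq \MAut(\RM_2(r,m))$, the argument concludes by applying Theorem~\ref{thm-main} with $t=3$: part~(1) guarantees $e(\RM_2(r,m)(2|2^\ell)) = \RM_2(r,m)(2|2^\ell)$ for every $e \in E$, and part~(2) delivers the $3$-design property for every nonempty weight class with $i \ge 3$. I do not anticipate a real obstacle; the substantive content lies entirely in the $3$-transitivity verification, and it is precisely the use of $\gf(2)$ there that makes strength $t=3$ attainable --- over $\gf(q)$ with $q > 2$ the group $\AGL(m, \gf(q))$ is only $2$-transitive (collinear triples can fail to be affinely independent), consistent with the earlier remark that the generalised Reed-Muller codes for $q > 3$ support only $2$-designs.
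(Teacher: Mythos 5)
Your proposal is correct and follows essentially the same route as the paper: both take the affine group $\AGL_m(\gf(2))$ (the paper writes $\GA_m(\gf(2))$ and notes $\PAut=\MAut=\GAut$ for binary codes), use its triple transitivity on $\gf(2)^m$, and conclude via Theorem~\ref{thm-main} with $t=3$. The only difference is cosmetic: the paper cites the triple transitivity (and the identification of $\Aut(\RM_2(r,m))$ with the affine group) from the literature, whereas you verify the $3$-transitivity directly from the fact that any two distinct nonzero vectors over $\gf(2)$ are linearly independent.
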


\begin{proof}
 Notice that 
$$
\Aut(\RM_2(r,m))=\MAut(\RM_2(r,m))=\PAut(\RM_2(r,m))=\GA_m(\gf(2)), 
$$
which is the general affine group and triply transitive acting on $\gf(2)^m$ \cite{DTbook22}. It then follows from Theorem 
\ref{thm-main} that $\bD_i(\RM_2(r,m)(2|2^\ell))$ is a $3$-design for each $i \geq 3$ with $A_i(\RM_2(r,m)(2|2^\ell))>0$.  
\end{proof}

\begin{open}
Determine the values $(i, \lambda_i)$ for the  $3$-$(2^m, i, \lambda_i)$ designs in Theorem \ref{thm-last}.  
\end{open}

It is well known that $\RM_2(1,m)$ has weight enumerator $1+(2^{m+1}-2)z^{2^{m-1}}+z^{2^m}$. 
The following theorem gives the weight distribution of the lifted code $\RM_2(1,m)(2|2^\ell)$.  

\begin{theorem}\label{thm-july241wtd}
Let $m \geq 3$ and $1 \leq \ell \leq m$. Then the weight distribution of $\RM_2(1,m)(2|2^\ell)$ is the following:
\begin{eqnarray*}
A_0 &=& 1, \\
A_{2^m-2^{m-h}} &=& 2^h \prod_{j=1}^h \frac{(2^\ell -2^{j-1})(2^m-2^{j-1})}{2^{j-1}(2^j-1)} \mbox{ for each } 1 \leq h \leq \ell, \\
A_{2^m} &=& 2^{\ell(m+1)}-1-\sum_{h=1}^\ell A_{2^m-2^{m-h}}, \\
A_i &=& 0 \ \mbox{ for other } \ i. 
\end{eqnarray*}
\end{theorem}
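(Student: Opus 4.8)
The plan is to realize each codeword of $\RM_2(1,m)(2|2^\ell)$ as a list of $\ell$ affine Boolean functions and to read its weight off the common zero set of that list. By Result~1 of Theorem~\ref{thm-fundam2} and the bijection recorded in its proof, the codewords of the lifted code correspond one-to-one to the matrices $B\in\gf(2)^{\ell\times(m+1)}$ (here $k=\dim\RM_2(1,m)=m+1$), where the $i$-th row $\ba_i=(a_{i,0},a_{i,1},\dots,a_{i,m})$ encodes the affine function $f_i(x)=a_{i,0}+\sum_{j=1}^m a_{i,j}x_j$. I would split $B=[\,c\mid M\,]$, where $c=(a_{1,0},\dots,a_{\ell,0})^T$ collects the constant terms and $M=(a_{i,j})_{1\le i\le\ell,\,1\le j\le m}$ is the $\ell\times m$ matrix of linear parts.

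First I would compute the weight. By equation~(\ref{eq:3}), the weight of the codeword attached to $B$ equals $2^m$ minus the number of points $x\in\gf(2)^m$ at which every $f_i$ vanishes, that is, minus the number of solutions of the affine system $Mx^T=c$ (recall $-1=1$ in characteristic $2$); note that here the coordinates are indexed by the distinct evaluation points, so no multiplicity issue arises. Setting $r=\rank(M)$, this system has $2^{m-r}$ solutions when it is consistent and none otherwise. Hence the codeword has weight $2^m-2^{m-r}$ when $Mx^T=c$ is consistent and weight $2^m$ when it is inconsistent; in particular $M=0,\ c=0$ gives the zero codeword, while $M=0,\ c\ne0$ gives weight $2^m$.

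Next I would count. For a fixed $M$ of rank $r$, the system $Mx^T=c$ is consistent precisely when $c$ lies in the column space of $M$, an $r$-dimensional subspace of $\gf(2)^\ell$, so there are exactly $2^r$ admissible $c$. Multiplying by the number of $\ell\times m$ matrices over $\gf(2)$ of rank $r$, which equals $\prod_{j=1}^r \frac{(2^\ell-2^{j-1})(2^m-2^{j-1})}{2^{j-1}(2^j-1)}$ as recorded in the proof of Theorem~\ref{thm-liftedSimplexWD}, gives for each $1\le h\le\ell$
\[
A_{2^m-2^{m-h}}=2^{h}\prod_{j=1}^{h}\frac{(2^\ell-2^{j-1})(2^m-2^{j-1})}{2^{j-1}(2^j-1)},
\]
and the rank-$0$ consistent case ($B=0$) contributes $A_0=1$. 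Because $\ell\le m$ forces $r\le\ell$, the weights $2^m-2^{m-h}$ for $0\le h\le\ell$ are pairwise distinct and all strictly smaller than $2^m$, so no weights coincide; every remaining codeword has an inconsistent system and hence weight $2^m$, which I would count as the complement $A_{2^m}=2^{\ell(m+1)}-1-\sum_{h=1}^\ell A_{2^m-2^{m-h}}$.

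Once the dictionary ``codeword $\leftrightarrow$ affine system $Mx^T=c$'' is set up, the argument is essentially mechanical, so the main point requiring care is separating the two independent counting parameters: the rank of $M$, which fixes the weight and supplies the matrix-enumeration factor, and the consistency of the right-hand side $c$, which supplies the extra factor $2^h$. Correctly isolating these, and absorbing the degenerate rank-$0$ cases into $A_0$ and $A_{2^m}$, is exactly what makes the displayed expressions come out as stated.
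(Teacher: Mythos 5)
Your proposal is correct and follows essentially the same route as the paper's own proof: both identify codewords with matrices $B\in\gf(2)^{\ell\times(m+1)}$ via Theorem \ref{thm-fundam2}, split $B$ into its linear part and constant column, read the weight off the solution set of the resulting affine system (weight $2^m-2^{m-r}$ if consistent with $\rank$ $r$, weight $2^m$ if inconsistent), and count by multiplying the number of rank-$h$ matrices by the $2^h$ admissible right-hand sides in the column space. The paper phrases consistency as $\rank(B_1)=\rank(B)$ rather than $c$ lying in the column space of $M$, but this is the same condition, and the final bookkeeping (including $A_{2^m}$ by complementation) is identical.
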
 

\begin{proof} 
We follow the notation of the proof of Theorem \ref{thm-liftedSimplexWD}. 
Let $n=2^m$.  Let $\{ \bg_1, \ldots, \bg_n\}=\gf(2)^{m \times 1}$. By definition,  
\begin{eqnarray*}
G(1,m)=\left[ 
\begin{array}{lll}
\bg_1 & \cdots & \bg_n \\
1       &  \cdots & 1 
\end{array}
\right] =\left[ \ba_1^T \cdots \ba_n^T\right] 
\end{eqnarray*} 
is a generator matrix of $\RM_2(1,m)$.

For any $B=[B_1 B_2]\in \gf(q)^{\ell \times (m+1)}$, where $B_1\in \gf(q)^{\ell \times m}$ and $B_2\in \gf(q)^{\ell \times 1}$. It is easily seen that  
\begin{eqnarray*}
B 
\left[ 
\begin{array}{l}
x_1 \\
\vdots \\
x_m \\
1 
\end{array}
\right] 
=\bzero^T 
\mbox{ if and only if } 
B_1 
\left[ 
\begin{array}{l}
x_1 \\
\vdots \\
x_m 
\end{array}
\right] 
=B_2.
\end{eqnarray*}
It follows that 
$$ 
\left|\{\ba_1, \ldots, \ba_n  \} \cap V_B  \right|=\begin{cases}
	0 & \mbox{ if } \rank(B_1)\neq \rank(B),\\
	2^{m-h} & \mbox{ if } \rank(B_1)=\rank(B)=h,
\end{cases}
 $$
where $1 \leq h \leq \ell$. By (\ref{eq:4}), the nonzero weights of $\C(q| q^\ell)$ are $$\{w_h:=2^m-2^{m-h}:~1\leq h \leq \ell \} \cup \{2^m\}.$$

We now determine $A_{2^m-2^{m-h}}$ for $1 \leq h \leq \ell$.  Let $\rank(B_1)=h$. It is known that 
\begin{eqnarray}\label{eqn-july2412}
|\{B_1 \in \gf(2)^{\ell \times m}: \rank(B_1)=h \}| = \prod_{j=1}^h \frac{(2^\ell -2^{j-1})(2^m-2^{j-1})}{2^{j-1}(2^j-1)}. 
\end{eqnarray}
Note that 
\begin{eqnarray}\label{eqn-july2413}
|\{B_1 \bx : \bx \in \gf(2)^{m \times 1} \}|=2^h. 
\end{eqnarray} 
Combining \eqref{eq:5}, \eqref{eqn-july2412} and \eqref{eqn-july2413} yields the desired formula for $A_{2^m-2^{m-h}}$. This completes the proof. 
\end{proof}

Combining Theorems \ref{thm-last} and and \ref{thm-july241wtd} yields the following. 

\begin{corollary}\label{cor-july24design3} 
Let $m \geq 3$ and $1 \leq h \leq \ell \leq m$. Then $\bD_{2^m-2^{m-h}}(\RM_2(1,m)(2|2^\ell))$ is a $3$-$(2^m,  2^m-2^{m-h}, \lambda_h)$ design for some positive integer $\lambda_h$.  
\end{corollary}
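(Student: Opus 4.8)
The plan is to obtain Corollary~\ref{cor-july24design3} as a direct consequence of the two results it cites, so the proof is essentially an application rather than a fresh argument. First I would invoke Theorem~\ref{thm-july241wtd}, which gives the explicit weight distribution of $\RM_2(1,m)(2|2^\ell)$; the crucial extracted fact is that for each $h$ with $1 \leq h \leq \ell$ the weight $i = 2^m - 2^{m-h}$ occurs with frequency
$$
A_{2^m-2^{m-h}}=2^h \prod_{j=1}^h \frac{(2^\ell -2^{j-1})(2^m-2^{j-1})}{2^{j-1}(2^j-1)},
$$
which is a product of positive factors and hence strictly positive. This verifies the hypothesis $A_i(\RM_2(1,m)(2|2^\ell))>0$ needed to trigger the design machinery, and it does so for every admissible $h$ simultaneously, giving the full range $1 \leq h \leq \ell$ of the corollary.

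Next I would apply Theorem~\ref{thm-last} with $r=1$. That theorem asserts that whenever $A_i(\RM_2(r,m)(2|2^\ell))>0$ for some $i \geq 3$, the incidence structure $\bD_i(\RM_2(r,m)(2|2^\ell))$ is a $3$-$(2^m, i, \lambda_i)$ design. Setting $i = 2^m - 2^{m-h}$, the positivity just established supplies the needed nonvanishing, so the conclusion follows provided $i \geq 3$. Thus the one genuine verification is the bound $2^m - 2^{m-h} \geq 3$: since $m \geq 3$ and $h \geq 1$, the smallest such weight is $2^m - 2^{m-1} = 2^{m-1} \geq 4 > 3$, so the threshold $i \geq 3$ in Theorem~\ref{thm-last} is comfortably met for every $h$ in range.

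Stitching these together, for each $h$ with $1 \leq h \leq \ell \leq m$ the weight $2^m - 2^{m-h}$ is a genuinely occurring weight of the lifted code, it exceeds $3$, and therefore $\bD_{2^m-2^{m-h}}(\RM_2(1,m)(2|2^\ell))$ is a $3$-$(2^m, 2^m-2^{m-h}, \lambda_h)$ design for some $\lambda_h$, with $\lambda_h$ positive precisely because the block set is nonempty. I do not expect any serious obstacle here: the heavy lifting—the triple transitivity of the affine group $\GA_m(\gf(2))$ and its consequence via Theorem~\ref{thm-main}, together with the rank-counting that produces the weight distribution—has already been absorbed into Theorems~\ref{thm-last} and~\ref{thm-july241wtd}. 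The only point requiring care is bookkeeping of the index range: making sure that the corollary's hypothesis $1 \leq h \leq \ell \leq m$ lines up with the range $1 \leq h \leq \ell$ over which Theorem~\ref{thm-july241wtd} guarantees positivity, which it does. Consequently the proof is a short two-line combination, and the statement that $\lambda_h$ is a \emph{positive} integer (rather than merely an integer) is exactly what the strict positivity of $A_{2^m-2^{m-h}}$ certifies.
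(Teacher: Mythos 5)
Your proposal is correct and follows exactly the paper's route: the paper obtains Corollary~\ref{cor-july24design3} simply by combining Theorem~\ref{thm-july241wtd} (positivity of $A_{2^m-2^{m-h}}$) with Theorem~\ref{thm-last}, which is precisely your argument. Your explicit verification that $2^m-2^{m-h}\geq 2^{m-1}\geq 4 > 3$ and that the frequencies are strictly positive merely spells out details the paper leaves implicit.
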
  

\begin{open}
Determine the value $\lambda_h$ with $2 \leq h \leq \ell$ for the $3$-$(2^m,  2^m-2^{m-h}, \lambda_h)$ design 
in Corollary \ref{cor-july24design3}. 
\end{open} 

We have the following remarks about the $3$-designs in Corollary \ref{cor-july24design3}. 
\begin{itemize}
\item It follows from Theorem \ref{thm-fundam2} that $\bD_{2^{m-1}}(\RM_2(1,m)(2|2^\ell))$ is 
the same as $\bD_{2^{m-1}}(\RM_2(1,m))$ and is not new. 
\item The other designs $\bD_{2^m-2^{m-h}}(\RM_2(1,m)(2|2^\ell))$ with $2 \leq h \leq \ell$ look new. 
\end{itemize}

As a corollary of Theorem \ref{thm-july241wtd},  we have the following. 

\begin{corollary}\label{cor-3wtcoderm} 
Let $m \geq 3$. Then the lifted code $\RM_2(1,m)(2|2^2)$ has weight enumerator 
$$
1+3(2^{m+1}-2)z^{2^{m-1}} + 4(2^m-1)(2^m-2)z^{3\times 2^{m-2}} + 3(2^{m+1}-1)z^{2^{m}}. 
$$
\end{corollary}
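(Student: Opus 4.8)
The plan is to specialize the general weight-distribution formula of Theorem~\ref{thm-july241wtd} to the case $\ell = 2$, so that the proof reduces to a short computation. With $\ell = 2$ the admissible range $1 \leq h \leq \ell$ consists of exactly the two values $h \in \{1, 2\}$, producing the two nonzero weights $2^m - 2^{m-1} = 2^{m-1}$ and $2^m - 2^{m-2} = 3\cdot 2^{m-2}$, together with the full weight $2^m$. Thus $\RM_2(1,m)(2|2^2)$ is a three-weight code, and it only remains to evaluate the three frequencies.

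First I would compute $A_{2^{m-1}}$, corresponding to $h=1$. The product in Theorem~\ref{thm-july241wtd} has a single factor, giving
$$
A_{2^{m-1}} = 2 \cdot \frac{(2^2 - 1)(2^m - 1)}{2^0(2^1 - 1)} = 6(2^m - 1) = 3(2^{m+1} - 2),
$$
which is the asserted coefficient of $z^{2^{m-1}}$. Next I would compute $A_{3\cdot 2^{m-2}}$, corresponding to $h = 2$; here the product runs over $j \in \{1, 2\}$, and after cancelling the common factor $3$ between the $j=1$ and $j=2$ terms one obtains
$$
A_{3\cdot 2^{m-2}} = 4\,(2^m - 1)(2^m - 2),
$$
matching the coefficient of $z^{3\times 2^{m-2}}$.

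Finally I would read off the coefficient of $z^{2^m}$ from the relation $A_{2^m} = 2^{\ell(m+1)} - 1 - \sum_{h=1}^{\ell} A_{2^m - 2^{m-h}}$ of Theorem~\ref{thm-july241wtd}, which for $\ell=2$ reads $A_{2^m} = 2^{2(m+1)} - 1 - A_{2^{m-1}} - A_{3\cdot 2^{m-2}}$. Expanding $4(2^m-1)(2^m-2) = 2^{2m+2} - 12\cdot 2^m + 8$ shows that the leading term $2^{2m+2}$ cancels against the total count $2^{2m+2}$, and collecting the remaining terms leaves $A_{2^m} = 6\cdot 2^m - 3 = 3(2^{m+1} - 1)$, as claimed. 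Since every other $A_i$ vanishes by Theorem~\ref{thm-july241wtd}, this pins down the full weight enumerator.

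There is essentially no conceptual obstacle, as the entire content is carried by Theorem~\ref{thm-july241wtd}; the only point requiring care is the arithmetic in the last step, where the exact cancellation of the $2^{2m+2}$ terms must be verified so that $A_{2^m}$ comes out linear rather than quadratic in $2^m$.
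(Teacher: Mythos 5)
Your proposal is correct and matches the paper's intent exactly: the paper presents this result as an immediate corollary of Theorem~\ref{thm-july241wtd}, obtained by setting $\ell=2$ and evaluating the three frequencies, which is precisely what you do. Your arithmetic checks out, including the cancellation of the $2^{2m+2}$ terms that yields $A_{2^m}=3(2^{m+1}-1)$.
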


\begin{conj}\label{conj-august31}
Let $m \geq 3$.  Then 
$\bD_{3\times 2^{m-2}}(\RM_2(1,m)(2|2^2))$ is a $3$-$(2^m,  3\times 2^{m-2}, \lambda_m)$ design with 
$$
\lambda_m = \frac{2\binom{3\times 2^{m-2}}{3} (2^m-1)(2^m-2)}{3\binom{2^m}{3}}
$$
\end{conj}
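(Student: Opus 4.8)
The plan is to take the $3$-design property for granted (Theorem \ref{thm-last} applies, since $A_{3\times 2^{m-2}}(\RM_2(1,m)(2|2^2))>0$ by Corollary \ref{cor-3wtcoderm}) and to concentrate solely on evaluating the index $\lambda_m$. Writing $v=2^m$, $k=3\times 2^{m-2}$ and $t=3$, the double count of incident (block, $3$-subset) pairs yields $b\binom{k}{3}=\lambda_m\binom{v}{3}$, where $b=|\cB_k(\RM_2(1,m)(2|2^2))|$ is the number of blocks. Hence everything reduces to determining $b$, and the asserted formula will drop out after substituting $b=\frac{2}{3}(2^m-1)(2^m-2)$ into $\lambda_m=b\binom{k}{3}/\binom{v}{3}$.

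To count blocks I would reuse the matrix parametrization from the proof of Theorem \ref{thm-july241wtd}. There each codeword of $\RM_2(1,m)(2|2^2)$ corresponds bijectively to a matrix $B=[B_1\,|\,B_2]$ with $B_1\in\gf(2)^{2\times m}$ and $B_2\in\gf(2)^{2\times 1}$, and the zero coordinates of the codeword are exactly the $\bg\in\gf(2)^m$ satisfying $B_1\bg^T=B_2$. A codeword has weight $3\times 2^{m-2}=2^m-2^{m-2}$ precisely when $\rank(B_1)=\rank(B)=2$, in which case its zero set is the solution set of $B_1\bg^T=B_2$, an affine $(m-2)$-flat of $\gf(2)^m$. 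Consequently the support (block) of such a codeword is the complement of an affine $(m-2)$-flat, and two such codewords share a block if and only if they determine the same flat.

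The key step is to show that the induced map from weight-$(3\times 2^{m-2})$ codewords to affine $(m-2)$-flats is surjective and exactly $6$-to-$1$. Given a flat $W=\bg_0+U$ with $\dim U=m-2$, a matrix $B=[B_1\,|\,B_2]$ realizes $W$ as its zero set if and only if $\ker(B_1)=U$ (equivalently the row space of $B_1$ equals the $2$-dimensional annihilator $U^\perp$) and $B_2=B_1\bg_0^T$. The rows of such a $B_1$ form an ordered basis of $U^\perp$, and the number of ordered bases of a $2$-dimensional $\gf(2)$-space is $(2^2-1)(2^2-2)=6$; since $B_2$ is then forced, exactly $6$ matrices, hence $6$ codewords, yield each flat. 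Therefore $b$ equals the number of affine $(m-2)$-flats of $\gf(2)^m$, namely the number $\frac{(2^m-1)(2^m-2)}{6}$ of $2$-dimensional subspaces times the $4$ cosets of each, giving $b=\frac{2}{3}(2^m-1)(2^m-2)=\frac{1}{6}A_{3\times 2^{m-2}}$. Substituting into the identity of the first paragraph produces precisely the claimed $\lambda_m$.

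The delicate point, and the reason the conjecture carries the factor $2$ rather than $4$, is this multiplicity count. The three $\gf(4)^*$-scalar multiples of a fixed codeword share its support, accounting for a factor of $3$; but each block is in fact shared by $6$ codewords. Indeed, left multiplication of $B$ by $\GL_2(\gf(2))$ (order $6$) fixes the flat $W$ while permuting the $6$ realizing matrices simply transitively, and the scalar action is exactly its cyclic subgroup of order $3$, leaving $6/3=2$ scalar-inequivalent codewords per block. Verifying that these are the only coincidences of supports — which the $6$-to-$1$ argument above secures — is the sole nontrivial ingredient; once it is in place the computation of $\lambda_m$ is purely mechanical.
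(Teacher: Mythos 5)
The statement you were asked to prove is only a \emph{conjecture} in the paper: no proof is given there, and the value of $\lambda_h$ for $h\geq 2$ is explicitly left open. Your argument is correct and settles it. The two ingredients you combine are the right ones: the $3$-design property comes for free from Corollary \ref{cor-july24design3} (i.e., Theorem \ref{thm-last}), so the index is forced by the double count $\lambda_m\binom{2^m}{3}=b\binom{3\times 2^{m-2}}{3}$, and everything reduces to the block count $b$. Your determination of $b$ correctly reuses the matrix parametrization from the paper's proof of Theorem \ref{thm-july241wtd}: codewords of weight $2^m-2^{m-2}$ correspond bijectively to matrices $B=[B_1\,|\,B_2]\in\gf(2)^{2\times(m+1)}$ with $\rank(B_1)=\rank(B)=2$, whose zero set is the affine $(m-2)$-flat $\{\bx:B_1\bx^T=B_2\}$; a given flat $\bg_0+U$ is realized exactly by those $B$ whose $B_1$ has as rows an ordered basis of the $2$-dimensional annihilator $U^\perp$ (and then $B_2=B_1\bg_0^T$ is forced, and well defined because $U=\ker B_1$), giving the $6$-to-$1$ count; and the number of affine $(m-2)$-flats is $4\cdot\frac{(2^m-1)(2^m-2)}{6}=\frac{2}{3}(2^m-1)(2^m-2)$. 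Hence $b=\frac{2}{3}(2^m-1)(2^m-2)=A_{3\times 2^{m-2}}/6$, which yields precisely the conjectured $\lambda_m$. Your closing remark also isolates the one genuinely delicate point correctly: each block is the support of $6$ codewords, not merely of the $3$ scalar multiples of one codeword, because the $\gf(4)^*$-action on $B$ is only the order-$3$ cyclic subgroup of the $\GL_2(\gf(2))$-action fixing the flat. As consistency checks, for $m=4$ your formula gives $\lambda_4=55$, matching the paper's computed example $\bD_{12}(\RM_2(1,4)(2|2^2))$, and for $m=3$ it gives the complete $3$-$(8,6,10)$ design, matching $b=28=\binom{8}{6}$.
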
 

\begin{example}
The lifted code $\RM_2(1,4)(2|2^2)$ has parameters $[16,5,8]_4$ and weight enumerator 
$ 
1+ 90z^8 + 840z^{12} +93z^{16}. 
$ 
Furthermore, $\bD_{12}(\RM_2(1,4)(2|2^2))$ is a $3$-$(16,  12, 55)$ design.  
\end{example}

\begin{open} 
Determine the weight enumerators of the lifted codes  $\RM_2(r,m)(2|2^\ell)$ for $2 \leq r \leq m-3$ and $2 \leq \ell \leq m$.   
\end{open}

\begin{theorem}\label{thm-liftedRMWD2}
Let $m \geq 3$ and $1 \leq \ell \leq m$. Then the lifted code $\RM_2(m-2,m)(2|2^\ell)$ has  
weight enumerator 
$$
2^{-\ell (m+1)} (1+(2^\ell-1)z )^{2^m} \ A \left(   \frac{1-z}{1+(2^\ell -1)z}\right), 
$$
where 
\begin{eqnarray*}
A(z)=1+\sum_{i=1}^{2^m} A_i z^i 
\end{eqnarray*}
and these $A_i$ were given in Theorem \ref{thm-july241wtd}. 
\end{theorem}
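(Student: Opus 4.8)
The plan is to obtain the weight enumerator of $\RM_2(m-2,m)(2|2^\ell)$ by recognizing it as the dual of $\RM_2(1,m)(2|2^\ell)$ and then applying the MacWilliams Identity, exactly mirroring the proof of Theorem \ref{thm-liftedHammingWD}.

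First I would invoke the duality relation $\RM_2(r,m)(2|2^\ell)^\perp = \RM_2(m-1-r,m)(2|2^\ell)$ established in the preceding theorem of this section. Setting $r = m-2$ gives $m-1-r = 1$, so $\RM_2(m-2,m)(2|2^\ell)^\perp = \RM_2(1,m)(2|2^\ell)$; taking duals once more (using $(\C^\perp)^\perp = \C$) yields $\RM_2(m-2,m)(2|2^\ell) = \RM_2(1,m)(2|2^\ell)^\perp$. Thus the desired code is the dual of a code whose weight distribution is already known.

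Next, the weight enumerator of $\RM_2(1,m)(2|2^\ell)$ is precisely the polynomial $A(z)$ recorded in Theorem \ref{thm-july241wtd}. Since $\RM_2(1,m)(2|2^\ell)$ is a code over $\gf(2^\ell)$ of length $n = 2^m$ and dimension $\sum_{i=0}^1 \binom{m}{i} = m+1$, its cardinality is $|\RM_2(1,m)(2|2^\ell)| = 2^{\ell(m+1)}$. The MacWilliams Identity over $\gf(2^\ell)$ (so that $Q = 2^\ell$) then reads
$$
W_{\C^\perp}(z) = \frac{1}{|\C|}\,(1 + (2^\ell - 1)z)^{n}\, W_{\C}\!\left( \frac{1-z}{1+(2^\ell-1)z} \right),
$$
and substituting $\C = \RM_2(1,m)(2|2^\ell)$, $W_\C = A$, $n = 2^m$, and $|\C| = 2^{\ell(m+1)}$ produces exactly the claimed expression for the weight enumerator of $\RM_2(m-2,m)(2|2^\ell)$.

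There is essentially no hard step: the genuine work was already carried out in Theorem \ref{thm-july241wtd} (the explicit weight distribution of $\RM_2(1,m)(2|2^\ell)$) and in the duality theorem, so what remains is a routine MacWilliams transform. The only points demanding care are bookkeeping ones. One must use the correct dimension $m+1$ of $\RM_2(1,m)(2|2^\ell)$ so that the normalization constant in the transform matches the factor $2^{-\ell(m+1)}$, and one must apply the MacWilliams Identity with $Q = 2^\ell$ rather than $Q = 2$, since the lifted code is a code over the extension field $\gf(2^\ell)$ and the transform parameter is the size of that field.
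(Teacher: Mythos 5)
Your proposal is correct and matches the paper's own proof, which likewise obtains the result by combining the duality $\RM_2(m-2,m)(2|2^\ell) = \RM_2(1,m)(2|2^\ell)^\perp$ (via Theorem \ref{thm-fundam3}), the weight distribution $A(z)$ from Theorem \ref{thm-july241wtd}, and the MacWilliams Identity over $\gf(2^\ell)$ with the normalization $2^{-\ell(m+1)}$ coming from the dimension $m+1$. Your write-up simply makes explicit the bookkeeping (the field size $Q=2^\ell$ and the code cardinality) that the paper leaves implicit.
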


\begin{proof}
The desired conclusion follow from Theorems \ref{thm-fundam3}, \ref{thm-july241wtd}, \ref{thm-fundam2} and the MacWilliams Identity. 
\end{proof}

It is known that $D_i(\RM_q(r,m))$ is only a $2$-design but not a $3$-design 
for $q>2$ \cite[Chapter 6]{DTbook22}. Hence, the $3$-designs $\bD_i(\RM_2(r,m)(2|2^\ell))$  in Theorem \ref{thm-last} are  
valuable and interesting. 

\section{Summary and concluding remarks}\label{sec-last}

The main contributions of this paper are summarized as follows. 
\begin{itemize}
\item Certain fundamental results for lifted linear codes were proved in Theorems \ref{thm-fundam2} and \ref{thm-fundam3}. 
\item The support $2$-designs of  the lifted projective Reed-Muller codes were studied  in Theorem 
\ref{thm-designresult191}.  New infinite families of $2$-designs were obtained by studying  the lifted projective Reed-Muller codes. 
\item The support $2$-designs of  the lifted Simplex codes $\cS_{(q,m)}(q|q^\ell)$ were characterised in  Theorem  \ref{thm-liftedSimplexCodeDesigns2}.  New infinite families of $2$-designs were obtained by studying the lifted Simplex codes. 
\item The support $2$-designs of  the lifted Hamming codes $\cH_{(q,m)}(q|q^\ell)$ were characterised 
in Theorem  \ref{thm-liftedHammingCodeDesigns}.  New infinite families of $2$-designs were obtained by studying the lifted Hamming codes. 
\item An infinite family of three-weight projective codes over $\gf(4)$ was obtained in Corollary  \ref{cor-3wtcoderm}.  
\item The weight distributions of the lifted codes  $\RM_2(1,m)(2|2^\ell)$ were settled in Theorem \ref{thm-july241wtd}. 
\item The weight distributions of the lifted codes  $\RM_2(m-2,m)(2|2^\ell)$ were settled in Theorem
\ref{thm-liftedRMWD2}. 
\item It was proved in Theorem \ref{thm-last} and Corollary \ref{cor-july24design3} that that many infinite families of $3$-designs are supported by 
the lifted codes $\RM_2(r,m)(2|2^\ell)$.  New infinite families of $3$-designs were obtained by studying 
these lifted codes. 
\end{itemize}

The determination of the weight enumerator of a linear code is quite difficult in general. The settlement of 
the weight enumerator of a lifted code $\C(q|q^\ell)$ could be very difficult, even if the weight enumerator of the given code $\C$ is known.  It is much harder to determine the parameters of the designs studied in this paper. The reader is cordially invited to attack the six open problems and Conjecture 
\ref{conj-august31} presented in this paper. 

It is known that the permutation automorphism group $\PAut(\cH_{(2,m)})$ is the general linear group 
$\GL_m(\gf(2))$ \cite{Huffman98}.  It would be infeasible to characterise the permutation automorphism group $\PAut(\cH_{(q,m)})$ for $q>2$, as it varies from a Hamming code to another Hamming code of the same 
parameters, although all Hamming codes $\cH_{(q,m)}$ are monomially-equivalent.  In a special case, the 
permutation automorphism group $\PAut(\cH_{(q,m)})$ for $q>2$ was settled \cite{Gorkunov}.

\section*{Acknowledgements}

C. Ding's research was supported by the Hong Kong Research Grants Council,
Proj. No. 16301123 and in part by 
the UAEU-AUA joint research grant G00004614.  
Z. Sun’s research was supported by The National Natural Science Foundation of China under Grant Number 62002093. 
All the computations of this paper were done with the Magma 
software package. 


\end{document}